\newif\ifdraft \drafttrue
\newcommand{\THESYSTEM}{\textsf{HOARe}$^2$\xspace}
\newenvironment{myfigure}{\begin{figure}}{\end{figure}}
\theoremstyle{plain}
\newtheorem{lemma}{Lemma}[section]
\newtheorem{theorem}{Theorem}[section]
\theoremstyle{definition}
\newtheorem{definition}{Definition}[section]
\theoremstyle{corollary}
\newtheorem{corollary}{Corollary}[section]
\renewcommand{\implies}{\Rightarrow}
\newcommand\Small{\fontsize{8.2pt}{8.4pt}\selectfont}
\newcommand*\LSTfont{\Small\ttfamily\SetTracking{encoding=*}{-60}\lsstyle}
\newcommand{\lstt}[1]{\mbox{\LSTfont #1}}
\newcommand\lstiny{\fontsize{6.6pt}{6.8pt}\selectfont}
\newcommand{\lstts}[1]{\mbox{\lstiny\ttfamily\SetTracking{encoding=*}{-60}\lsstyle #1}}
\definecolor{DarkGreen}{rgb}{0.1,0.5,0.1}
\definecolor{DarkRed}{rgb}{0.5,0.1,0.1}
\definecolor{DarkBlue}{rgb}{0.1,0.1,0.5}
\crefname{section}{\S}{\S}
\Crefname{section}{\S}{\S}
\newcommand{\ra}{\rightarrow}
\def\reals{\ensuremath{\mathbb{R}}\xspace}
\def\nats{\ensuremath{\mathbb{N}}\xspace}
\newcommand{\mapx}[3]{{#1}_{#2}^{#3}}
\def\vars{\ensuremath{\mathcal{X}}\xspace}
\def\pcf{\ensuremath{\mathbf{PCF}}\xspace}
\def\pcfty{\ensuremath{\mathbf{Ty}}\xspace}
\def\pcfcoty{\ensuremath{\mathbf{CoreTy}}\xspace}
\newcommand{\coty}[1]{\widetilde{#1}}
\def\kwcase{\mathtt{case}}
\def\kwwith{\mathtt{with}}
\def\kwif{\mathtt{if}}
\def\kwthen{\mathtt{then}}
\def\kwelse{\mathtt{else}}
\def\kwlet{\mathtt{let}}
\def\kwletrec{\mathtt{letrec}}
\def\kwin{\mathtt{in}}
\def\kwunit{\mathtt{unit}}
\def\kwbind{\mathtt{bind}}
\def\kwlist{\mathtt{list}}
\newcommand{\snmark}{\downarrow}
\newcommand{\markvar}{\mathfrak{m}}
\newcommand{\sncond}{\mbox{$\mathcal{SN}$-guard}}
\newcommand{\slam}[2]{\lambda {#1} .\, {#2}}
\newcommand{\slet}[3]{\kwlet\ {#1} = {#2}\ \kwin\ {#3}}
\newcommand{\sletrec}[4]{\kwletrec^{#1}\ {#2}\ {#3} = {#4}}
\newcommand{\ssnletrec}[3]{\sletrec{\snmark}{#1}{#2}{#3}}
\newcommand{\sif}[3]{\kwif\ {#1}\ \kwthen\ {#2}\ \kwelse\ {#3}}
\newcommand{\sifseq}[5]{\kwcase\ {#1}\ \kwwith\ [\snil \Rightarrow {#5} \vbar \scons{#2}{#3} \Rightarrow {#4}]}
\newcommand{\scase}[2]{\kwcase\ {#1}\ \kwwith\ {#2}}
\newcommand{\sunitM}[1]{\kwunit_{\mathrm{M}}\ {#1}}
\newcommand{\sbindM}[3]{\kwbind_{\mathrm{M}}\ {#1} = {#2}\ \kwin\ {#3}}
\newcommand{\sunitC}[1]{{#1}_\uparrow}
\newcommand{\sbindC}[3]{\kwlet_\uparrow\ {#1} = {#2}\ \kwin\ {#3}}
\newcommand{\sone}[0]{()}
\newcommand{\strue}[0]{\mathtt{true}}
\newcommand{\sfalse}[0]{\mathtt{false}}
\newcommand{\snil}[0]{\ensuremath{\epsilon}}
\newcommand{\scons}[2]{{#1} \mathrel{::} {#2}}
\newcommand{\stfun}[2]{{#1} \rightarrow {#2}}
\newcommand{\stmod}[1]{\mathfrak{M}[{#1}]}
\newcommand{\stmodc}[1]{\mathfrak{C}[{#1}]}
\newcommand{\stunit}[0]{\bullet}
\newcommand{\stbool}[0]{\mathbb{B}}
\newcommand{\stnat}[0]{\mathbb{N}}
\newcommand{\stxreal}[0]{\overline{\mathbb{R}}}
\newcommand{\stprod}[2]{{#1} \times {#2}}
\newcommand{\stlist}[1]{{#1}\;\kwlist}
\def\lvmark{\triangleleft}
\def\rvmark{\triangleright}
\renewcommand{\l}[1]{#1_\lvmark}
\renewcommand{\r}[1]{#1_\rvmark}
\def\svar{\mathfrak{s}}
\newcommand{\rembed}[1]{|{#1}|}
\newcommand{\rrembed}[1]{\|{#1}\|}
\newcommand{\rmark}[1]{{#1}^{\Join}}
\def\pvars{\ensuremath{\mathcal{X}_{\mathcal{P}}}\xspace}
\def\rvars{\ensuremath{\mathcal{X}_{\mathcal{R}}}\xspace}
\newcommand{\expr}[0]{\ensuremath{\mathcal{E}}}
\newcommand{\rexpr}[0]{\ensuremath{\expr^{\Join}}}
\newcommand{\rtypes}[0]{\ensuremath{\mathcal{T}}\xspace}
\newcommand{\rassert}[0]{\ensuremath{\mathcal{A}}\xspace}
\newcommand{\rtmod}[3]{\mathfrak{M}_{#1,#2}[{#3}]}
\newcommand{\rtmodc}[1]{\mathfrak{C}[{#1}]}
\newcommand{\rtprod}[3]{\Pi ({#1} :: {#2}) .\, {#3}}
\newcommand{\rtref}[3]{\{ {#1} :: {#2} \vbar {#3} \}}
\newcommand{\tref}[3]{\{ {#1} : {#2} \vbar {#3} \}}
\def\rfalse{\bot}
\def\rtrue{\top}
\newcommand{\fquant}[4]{{#1}\ ({#2} : {#3}) .\, {#4}}
\newcommand{\rquant}[4]{{#1}\ ({#2} :: {#3}) .\, {#4}}
\def\quantvar{\mathcal{Q}}
\def\rformc{\mathcal{C}}
\newcommand{\rmapx}[4]{%
{#1}{\left\{\substack{\l{#2} \,\mapsto\, {#3}\\ \r{#2} \,\mapsto\, {#4}}\right\}}}
\newcommand{\renv}[1]{\mathcal{#1}}
\def\red{\mathrel{\rightarrow}}
\newcommand{\tyinterp}[1]{\llbracket {#1} \rrbracket}
\newcommand{\interp}[2]{\llbracket {#2} \rrbracket_{#1}}
\newcommand{\rinterp}[2]{\llparenthesis {#2} \rrparenthesis_{#1}}
\def\tyle{\preceq}
\newcommand{\dotted}[1]{\ensuremath{{#1}_\bot}}
\newcommand{\cfun}{\multimap}
\DeclareMathOperator{\dom}{dom}
\newcommand{\vbar}[0]{\mathrel{|}}
\newcommand{\fseven}{\textsc{F}7\xspace}
\newcommand{\fstar}{\textsc{F}${}^\star$\xspace}
\newcommand{\liquidH}{\textsc{LiquidHaskell}\xspace}
\newcommand{\rfstar}{\textsc{RF}${}^\star$\xspace}
\newcommand{\tmod}[1]{\mathfrak{M}[{#1}]}
\newcommand{\Expect}{\mathbf{E}}
\newcommand{\lift}[2]{\mathcal{L}_{#1}(#2)}
\def\dprov{\vdash_{D}}
\def\distone{\mathfrak{D}}
\def\disttwo{\mathcal{D}}
\def\realone{\mathrm{r}}
\def\natone{\mathrm{n}}
\def\R{\mathbb{R}}
\def\N{\mathbb{N}}
\def\typeone{\sigma}
\def\typetwo{\tau}
\def\tyo{\typeone}
\def\tyw{\typetwo}
\def\contextone{\Gamma}
\def\contexttwo{\Delta}
\def\ctxo{\contextone}
\def\ctxw{\contexttwo}
\newcommand{\iconstraintone}{\Phi}
\newcommand{\iconstrainttwo}{\Psi}
\def\cso{\iconstraintone}
\def\csw{\iconstrainttwo}
\newcommand{\icontextone}{\phi}
\newcommand{\icontexttwo}{\psi}
\def\tctxo{\icontextone}
\def\tctxw{\icontexttwo}
\def\tmo{\termone}
\def\termone{e}
\newcommand{\DFuzz}{{{\em DFuzz}}\xspace}
\newcommand{\Fuzz}{{{\em Fuzz}}\xspace}
\def\lin{\multimap}
\def\breturn{\mathop{\mathbf{return}}}
\def\sizeivarone{i}
\def\sensitermone{R}
\def\sizeitermone{S}
\newcommand{\eraseDP}[1]{| #1 |}
\newcommand{\rplusinfty}{\ensuremath{\overline{\R}^+}}
\newcommand{\rplus}{\R^+}
\newcommand{\M}[2][]{%
            \ifthenelse{\isempty{#1}}%
              {\mathsf{M}\, {#2}}
              {\mathsf{M}_{#1}\, {#2}}}
\begin{document}

\setlength{\pdfpageheight}{\paperheight}
\setlength{\pdfpagewidth}{\paperwidth}

\conferenceinfo{POPL~'15}{January 15--17, 2015, Mumbai, India}
\copyrightyear{2015} 
\copyrightdata{978-1-4503-3300-9/15/01} 
\doi{10.1145/2676726.2677000} 

\exclusivelicense

\titlebanner{DRAFT}        
\preprintfooter{DRAFT}     

\title{Higher-Order Approximate Relational Refinement Types \\
  for Mechanism Design and Differential Privacy}
\authorinfo
{
  Gilles Barthe$^{\star}$ \and
  Marco Gaboardi$^{\ddagger}$ \and
  Emilio Jes\'us Gallego Arias$^{\#}$ \\
  Justin Hsu$^{\#}$ \and
  Aaron Roth$^{\#}$ \and
  Pierre-Yves Strub$^{\star}$
}
{
  $^{\star}$ IMDEA Software Institute \and
  $^{\ddagger}$ University of Dundee, Scotland \and
  $^{\#}$ University of Pennsylvania
}
{}

\maketitle

\begin{abstract}
\emph{Mechanism design} is the study of algorithm design where the inputs to
the algorithm are controlled by strategic agents, who must be
\emph{incentivized} to faithfully report them. Unlike typical programmatic
properties, it is not sufficient for algorithms to merely satisfy the
property---incentive properties are only useful if the strategic agents
also \emph{believe} this fact.

Verification is an attractive way to convince agents that the incentive
properties actually hold, but mechanism design poses several unique challenges:
interesting properties can be sophisticated \emph{relational} properties of
probabilistic computations involving expected values, and mechanisms may rely on
other probabilistic properties, like \emph{differential privacy}, to achieve
their goals.

We introduce a relational refinement type system, called \THESYSTEM,
for verifying mechanism design and differential privacy. We show that
\THESYSTEM is sound w.r.t.\, a denotational semantics, and correctly
models $(\epsilon,\delta)$-differential privacy; moreover, we show
that it subsumes \DFuzz, an existing linear dependent type system for
differential privacy. Finally, we develop an SMT-based implementation
of \THESYSTEM and use it to verify challenging examples of mechanism
design, including auctions and aggregative games, and new proposed examples from
differential privacy.
\end{abstract}

\category{D.3.1}{Programming Languages}{Formal Definitions and Theory}[Semantics]
\category{D.2.4}{Software Engineering}{Software/Program Verification}.

\keywords{\hspace{-1.5ex} program logics; probabilistic programming}

\section{Introduction} \label{sec:intro}
When designing algorithms, we usually assume that the inputs are correctly
reported. However, in the real world, inputs may be provided by people who may
want to influence the outcome of the algorithm. \emph{Mechanism design} is the
field of algorithm design where the \emph{inputs} to the algorithm (often called
a \emph{mechanism}) are controlled by \emph{strategic agents} who may manipulate
what their inputs. In this setting, it is not enough to design an algorithm
which behaves correctly on correct input; the design of the mechanism must
convince (\emph{incentivize}) agents to provide their \emph{correct} inputs to
the algorithm.

The canonical application of mechanism design is auction design. In an auction,
the algorithmic problem can be very simple: for instance, allocate some set of
goods amongst a set of $n$ agents so as to maximize their sum value for the
goods. The inputs to the algorithmic problem are simply the agents' values for
the goods, but these are unknown to the algorithm designer. Instead, the
mechanism must elicit bids. To incentivize agents to bid honestly, actions
compute a price that each agent must pay. Auctions are generally designed so
that the allocation and payment rules incentivize agents to bid their true value
for the goods, \emph{no matter what their opponents do}. An auction that
satisfies this property is said to be \emph{dominant strategy incentive
  compatible}, or simply \emph{truthful}.  This is among the most important
\emph{solution concepts}---predictions about behavior of strategic agents---in
mechanism design.

Beyond auctions, mechanisms design can be used to handle more abstract games
where agents have a variety of actions and a real-valued \emph{utility function}
based on the actions selected by all agents. In most settings, agents do not
have any dominant strategies, and so we must be satisfied with weaker
\emph{solution concepts} like \emph{Nash equilibria}.  Informally, a set of
actions, one for each player, forms a Nash equilibrium if no player can increase
her utility unilaterally deviating to a different action \emph{so long as no
other player deviates}.\footnote{%
  Contrast this with dominant strategies: in a truthful auction, no agent can
  gain by deviating \emph{no matter what the other players play}.}
The hope is that strategic agents will collectively decide to play at
an equilibrium: no single agent can gain by deviating.

However, Nash equilibria can be an unrealistic prediction of behavior. First of
all, they are generally not unique: agents must somehow coordinate to play at a
single equilibrium, but different agents might prefer different equilibrium
outcomes.  Second of all, in games with a large number of players, agents
generally do not have complete knowledge about everyone's utility functions, and
so may not even know what the Nash equilibria of the game are. To help players
coordinate on an equilibrium, one approach is to design an equilibrium selection
mechanism. Agents are asked to report their utility functions to some
\emph{mediator}, and the mediator suggests some action for them to play. Agents
are strategic, so they are free to misreport their utility function, or
disregard the mediator's suggestion. A well-designed mediator will incentivize
agents to report truthfully and follow the recommendation.

A promising and recent tool for mediator design is \emph{differential
  privacy}~\citep{Dwork06}. The original goal of differential privacy was to
protect individuals privacy in data mining, by ensuring that answering the same
query on two databases differing in a single person's data leads to results that
are close in some sense. Seen another way, differential privacy limits any
individual's \emph{influence} on the result. This can be quite useful as a tool
for mechanism design: If the mediator satisfies differential privacy, agents
will have little incentive to deviate from truthful behavior since they can only
change the selected equilibria to a small degree.

While this is a theoretically clean idea, there can be practical issues:
currently proposed mechanisms are complex enough that agents may not be able to
verify the promised incentive properties of the mediator. In general, if agents
do not believe the incentive properties of a mechanism, then they may behave in
unpredictable ways or decline to participate.  Indeed, designers of an upcoming
public radio spectrum auction have stressed ``obviousness'' as a key
feature---the incentive features should be plainly apparent to any
agent~\citep{milgrom2014deferred}.  While this is a desirable goal for simple
mechanisms, it is hard to achieve for complex mechanisms. For these cases, we
propose an alternative approach: rather than simplify the mechanism, use formal
verification to automatically check incentive properties.

However, mechanism design poses serious challenges for verification.  First,
both differential privacy and equilibria properties are \emph{relational}
properties of programs, which reason about more than one run of the same
program; for instance, truthfulness states that the payoff of an agent in a run
when she reports truthfully is at least its payoff in a run where she reports
arbitrarily. Second, equilibria properties are significantly more involved than
typical program verification properties. For randomized mechanisms, properties
are stated in terms of \emph{expected value} rather than more standard
equivalences or relations between distributions. Finally, incentive properties
of mediator mechanisms rest on non-trivial interactions between game-theoretic
properties and differential privacy, so their formal verification must be
conducted in a framework that is expressive enough to reason about differential
privacy, game-theoretic properties, and interactions of the two. 

\paragraph*{Contributions}
To handle these challenges, we present \THESYSTEM, a type-based framework for
relational properties of higher-order probabilistic programs, like differential
privacy, truthfulness, and approximate equilibrium.  \THESYSTEM is based on
refinement types---an expressive type discipline that captures fine-grained
properties of computations by enriching types with
assertions~\citep{FreemanP91}---and tightly integrates several features:
relational refinements~\citep{rfstar}, refinements at higher types (where
assertions constrain the behavior of functions), and a polymonadic
representation of approximate refinements for probabilistic
computations~\citep{POPL:BKOZ12}. We demonstrate the theoretical and practical
relevance of \THESYSTEM through the following contributions:
\begin{itemize}
\item We demonstrate that \THESYSTEM achieves desirable meta-theore\-tical
  properties, like soundness with respect to a denotational semantics
  (\Cref{sec:semantics}) and semantic subtyping. As a contribution of
  independent interest, we show that the logical interpretation of
  non-termination adopted by existing refinement type systems is inconsistent
  with semantic subtyping when refinements at higher order types are allowed.

\item We define a type-preserving embedding of \DFuzz~\citep{GaboardiHHNP13}---a
  linear dependent type system for differential privacy---into \THESYSTEM, and
  recover soundness of \DFuzz from soundness of \THESYSTEM (\Cref{sec:dfuzz}).
  The embedding illustrates how semantic subtyping and refinements at higher
  types combine to internalize logical relations in \THESYSTEM.

\item We implement a type-checker for \THESYSTEM and verify examples drawn from
  differential privacy and mechanism design. For instance, we verify
  truthfulness properties of randomized auctions, and an equilibrium selection
  algorithm for aggregative games based on differential privacy.  The
  implementation is fully automated, discharging assertions using SMT solvers.
\end{itemize}
We discuss related work in \Cref{sec:relatedwork}, and conclude with possible
future directions in \Cref{sec:future}.

\section{Relational refinements, informally}\label{sec:overview}

We will establish properties of programs by using refinement types, an
expressive typing discipline introduced by~\citet{FreemanP91}. As is typical in
refinement type systems, we type expressions in two steps. First, we define a
simply typed system in which the type of probabilistic computations are modeled
using a probability monad $\tmod{\cdot}$; for example, the expected value of a
positive real-valued function w.r.t.\, a distribution is modeled by a function
$\Expect$ of type:
$\tmod{T} \rightarrow (T \rightarrow \rplus) \rightarrow \rplusinfty $,
where $\rplus$ denotes the type of positive reals and $\rplusinfty$ denotes
$\rplus$ extended with $+\infty$.

Next, we define a relational refinement type system for simply typed
expressions. Relational refinements~\citep{rfstar} specify properties
of \emph{pairs} of values via types of the form $\rtref{x}{T}{\phi}$,
where $\phi$ is a \emph{relational assertion}: a logical formula that
can express facts involving the \emph{left} instance $\l{x}$ and the
\emph{right} instance $\r{x}$ of $x$.  For instance, the type
$\rtref{x}{\mathbb{N}}{|\l{x} - \r{x} | \leq k}$ models pairs of
natural numbers which differ by at most $k$.

Both traditional and relational refinement type systems
(e.g.,~\citep{rfstar}) often forbid refinements at higher types, like
$\rtref{x}{T}{\phi}$ where $T$ is a function type. However, such
relational refinements are convenient to model properties of
probabilistic operators. For instance, given a distribution
$\mu :: \tmod{T}$, the following types for $\Expect\,\mu$ capture
monotonicity and linearity of expectation:
\[
\begin{array}{c}
\rtref{f}{T\rightarrow \rplus\!}{\forall z. \l{f}z\leq\r{f}z}
\rightarrow \rtref{r}{\rplusinfty}{\l{r}\leq\r{r}} \\[1mm]

\rtref{f}{\! T \rightarrow \rplus \!}{\!\forall
  z. \l{f}z\!=\! k\!\cdot\!\r{f}z}
\rightarrow \rtref{x}{\!\rplusinfty}{\!\l{x}\!=\!k\!\cdot\!\r{x}}
\end{array}
\]
Relational refinements can also be used to model relations between pairs of
distributions, like differential privacy.  A probabilistic computation
$F:T\rightarrow U$ is \emph{$(\epsilon, \delta)$-differentially private}
(w.r.t\, an adjacency relation $\Phi$) if for every $t_1,t_2\in T$, and for
every subset of outputs $E$,
\[
  t_1\mathrel{\Phi} t_2 \Longrightarrow \Pr_{x\leftarrow F~t_1}[x\in E] \leq
    \exp(\epsilon) \Pr_{x\leftarrow F~t_2}[x\in E] +\delta .
\]
The parameters $\epsilon$ and $\delta$ are non-negative real numbers
controlling the strength of the privacy guarantee. Using relational
refinement types, the type of $(\epsilon,\delta)$-differentially
private computations from $T$ to $U$ is:
\[
  \rtref{x}{T}{\l{x}\mathrel{\Phi}\r{x}}\rightarrow \rtref{\mu}{\tmod{U}}{
  \Delta_{\epsilon}(\l{\mu},\r{\mu}) \leq \delta} ,
\]
where%
\[
  \Delta_{\epsilon}(\mu_1,\mu_2) = \max_{E\subseteq U} \left(
    \Pr_{x\leftarrow\mu_1} [x\in E] - \exp(\epsilon) \Pr_{x\leftarrow\mu_2} [x
        \in E] \right)
\]
is the \emph{$\epsilon$-distance}~\cite{POPL:BKOZ12} between two distributions
$\mu_1$ and $\mu_2$ over $U$.

However, this modeling is not appropriate for practical program
verification; indeed, the definition of $\epsilon$-distance uses
probabilities and exponentials; it is possible to formalize basic
properties for these concepts, but more advanced reasoning, which is
required for some examples, is beyond the abilities of SMT solvers.
Instead, we introduce a probabilistic polymonad~\citep{polymonad},
with a type constructor of the form $\rtmod{\epsilon}{\delta}{\cdot}$
and two operators \textsf{unit} and \textsf{bind} with respective
types: 
$$\begin{array}{rcl} 
\textsf{unit} & : & T\rightarrow
\rtmod{\epsilon}{\delta}{T} \\ 
\textsf{bind} & : &
\rtmod{\epsilon}{\delta}{T} 
\rightarrow (T \rightarrow \rtmod{\epsilon'}{\delta'}{U}) 
\rightarrow \rtmod{\epsilon + \epsilon'}{\delta+\delta'}{U} .
\end{array}$$
The main advantage of the polymonad versus the explicit formalization
of $\epsilon$-distance is that all reasoning about probabilities is
confined to the definition of valid judgment, and to the proof of
soundness of the monadic rules. On the other hand, the refinement
types $T$ and $U$ remain standard relational refinements, and
do not need to refer to probabilities or exponentials.

The interpretation of $\rtmod{\epsilon}{\delta}{\cdot}$ is based on a lifting
operator $\lift{\epsilon,\delta}{\cdot}$ that turns a relation $\Psi$ on
$T_1\!\times\!T_2$ into a relation $\lift{\epsilon,\delta}{\Psi}$ on
$\tmod{T_1}\!\times\!\tmod{T_2}$; we will provide the formal definition in
\Cref{subsec:refinement}.

A useful property of lifting~\cite{POPL:BKOZ12} is that two distributions are
related by $\lift{\epsilon,\delta}{=}$ iff their $\epsilon$-distance is upper
bounded by $\delta$. In particular, $(\epsilon,\delta)$-differentially private
computations can be modeled by the relational refinement:
\[
  \rtref{x}{T}{\l{x}\mathrel{\Phi}\r{x}}\rightarrow
\rtmod{\epsilon}{\delta}{\rtref{y}{U}{\l{y} = \r{y}}} ,
\]
which leads to simpler verification conditions over individual outputs rather
than sets of outputs.

Another advantage of this polymonadic approach is that the type of the
\textsf{bind} operator captures the sequential composition theorem of
differential privacy~\citep{Dwork06}, and that it leads to an elegant type
system in which quantitative reasoning related to differential privacy is
confined to the rules of the polymonad. This is in contrast to some prior work
where quantitative reasoning is pervasive in all rules of the
system~\cite{POPL:BKOZ12}.

The last component of \THESYSTEM is a monad $\stmodc{\cdot}$ to model diverging
computations. While this is quite standard, this approach is key to reconcile
semantic subtyping, and refinements at higher types. We elaborate on this point
 next.

\subsection{An aside on non-termination and refinement types}
\label{app:inconsistency}
Exploiting the power of refinement types requires the ability to draw
useful inferences from assertions. These inferences are typically represented in
typing derivations via a subtyping relation $\preceq$, and applied with a special
typing rule, called \emph{subsumption}, that changes the type of an
expression to an arbitrary supertype. Ideally, the subtyping relation
should be complete w.r.t.\, the denotational semantics of the type
system---a property known as \emph{semantic subtyping}. However, extending
most existing refinement type systems with semantic subtyping can lead
to an inconsistency, because semantic subtyping conflicts with
typical logical treatments of non-termination.

Inconsistency can arise both in non-relational and relational
settings; let us consider the non-relational case.  Here,
inconsistency manifests itself as an expression of type
$\tref{x}{\stnat}{\bot}$ in the empty context, with the expression
reducing to a value. Existing refinement type systems such as \fseven,
\fstar and \liquidH  assign the type $\tref{x}{\stnat}{x\neq 0}
\rightarrow {\tref{y}{\stnat}{\bot}}$ to any recursive function $f$
that does not terminate on values $x\neq 0$, such as for instance
$ \sletrec{}{g}{x}{\scase{x}{[ 0 \Rightarrow 0 \mid s~y \Rightarrow g~x]}} $.
On the other hand, semantic subtyping validates the equivalence
\textsc{Fun-Sub}:
$$
\{x : T \!\vbar\! \phi\}\! \rightarrow \!\{y : U \!\vbar\! \psi\}
\simeq  \{f : T\!\rightarrow \!U \vbar \forall x:T.\phi
\!\Longrightarrow\!\psi[f~x/y]\}
$$
where we write $T\simeq U$ iff $T\preceq U$ and
$U\preceq T$. It follows that $f~0$ has the problematic type:\\

\AxiomC{$\vdash f: \tref{x}{\stnat}{x\neq 0} \rightarrow
\tref{y}{\stnat}{\bot}$} \UnaryInfC{$\vdash f:\tref{g}{\stnat \rightarrow
\stnat}{ \forall x:\stnat.x\neq 0 \Longrightarrow \bot}$} \UnaryInfC{$\vdash
f:\tref{g}{\stnat \rightarrow \stnat}{ \forall x:\stnat. \bot}$}
\UnaryInfC{$\vdash f: \stnat \rightarrow \tref{x}{\stnat}{\bot}$}
\AxiomC{$\vdash 0:\stnat$} \BinaryInfC{$\vdash f~0:\tref{x}{\stnat}{\bot} ,$}
\DisplayProof \\[2ex]
where the first three inferences are by subsumption (the first two by
\textsc{Fun-Sub}, and the second by the standard rule of consequence---replacing
an assertion by a logically weaker one).

This example shows that a naive combination of semantic subtyping with
refinements at higher types is inconsistent with the use of $\bot$ to model
diverging computation. We will follow a more semantically correct approach, by
modeling non-termination as a (monadic) effect using a monad $\stmodc{\cdot}$
that distinguishes non-terminating and terminating computations. In this way,
the inconsistency is avoided.

We conclude this discussion by noting that this counterexample is independent of
the evaluation strategy. Accordingly, it complements the recent observation by
\citet{Vazou+14:ICFP} that a logical interpretation of non-termination is
unsound for call-by-name, even in languages without higher-order refinements.
\citet{Vazou+14:ICFP} solve the issue by internalizing size types into
refinement types to enforce termination.

\section{The \THESYSTEM~System} \label{sec:system}
\subsection{Expressions}
\THESYSTEM{} is a relational type discipline for a $\lambda$-calculus
with inductive types, unbounded recursion and monads for probabilities
and partiality. For readability, we only present our calculus with
some fixed inductive types.

Let $\vars = \{ x, y, \ldots \}$ be a countably infinite set of
variables. The set $\pcf(\vars)$ of expressions with variables in $\vars$ is
defined as follows:
\begin{equation*}
  \begin{array}{@{}r@{\hspace{0.3em}}rl@{}}
    e
      & ::=   & x \vbar n \in \mathbb{N} \vbar \alpha \in \rplusinfty
                  \vbar \sone \vbar \snil \vbar \scons{\cdot}{\cdot}
                  \vbar \sfalse \vbar \strue\\
      & \vbar & e\ e \vbar \slam{x}{e} \vbar \slet{x}{e}{e} \vbar
                \sletrec{\markvar}{f}{x}{e}\\
      & \vbar & \sif{e}{e}{e} \vbar
                \sifseq{e}{x}{x}{e}{e}\\
      & \vbar & \sunitC{e} \vbar \sbindC{x}{e}{e} \vbar \sunitM{e} \vbar
      \sbindM{x}{e}{e} ,
  \end{array}
\end{equation*}
\noindent where $\markvar \in \{ \snmark, \cdot \}$ and $\rplusinfty$ stands
for $\rplus$ augmented with the infinity $\infty$. We write $\pcf$ for
$\pcf(\vars)$ when $\vars$ is clear from the context.

Most of the syntax is standard. The expressions $\sunitM{e}$ and
$\sbindM{x}{e}{e}$ corresponds to the unit and the multiplication of
the \emph{probabilistic monad}. Similarly, $\sunitC{e}$ and
$\sbindC{x}{e}{e}$ corresponds to the unit and the multiplication of
the \emph{partiality monad}. Finally, we have two expressions for
building recursive definitions: one for terminating programs, the
other for non terminating ones. We distinguish between them by means
of the superscript $\markvar \in \{ \snmark, \cdot \}$.

\THESYSTEM distinguishes between \emph{expressions} and
\emph{relational expressions}. The former are used in the subject of
typing judgments, and correspond to the actual programs to which we
can assign semantics. The latter are used in assertions.

\begin{definition}[Expressions and Relational Expressions]
Let $\rvars$ and $\pvars$ be two disjoint countably infinite sets of
\emph{relational} and \emph{plain} variables.
Associated with every relational variable $x\in \rvars$, we have a left instance
$\l{x}$ and a right instance $\r{x}$.
We write $\rmark{\rvars}$ for $\bigcup_{x \in \rvars} \{ \l{x}, \r{x} \}$
and $\rmark{\vars}$ for $\rmark{\rvars} \cup \pvars$.
The set of \THESYSTEM~\emph{expressions} $\expr$ is the set of expressions
in $\pcf(\pvars)$.
The set of \THESYSTEM~\emph{relational expressions} $\rexpr$ is the
set of expressions in $\pcf(\rmark{\vars})$, where only non-relational 
variables can be bound.
\end{definition}

\subsection{\THESYSTEM~Types}
We introduce the types of \THESYSTEM in two steps. First, we introduce
simple types; for simplicity, we restrict instances of inductive types
to base types. Then, we introduce relational refinement types, which
express properties about two interpretations of an expression.
\begin{definition}[Types]
The sets $\pcfty$ and $\pcfcoty$ of (simple) types and core (simple)
types are defined as follows:

\begin{center}
  $\begin{array}{rcl}
    \tau, \sigma, \ldots \in \pcfty
      & ::= & \coty{\tau} \vbar
              \stmod{\tau} \vbar
              \stmodc{\tau} \vbar
              \stfun{\tau}{\sigma}\\[.2em]
    \coty{\tau}, \coty{\sigma}, \ldots \in \pcfcoty
      & ::= & \stunit \vbar \stbool \vbar \stnat \vbar \stxreal \vbar
              \rplusinfty \vbar
              \stlist{\coty{\tau}} .
  \end{array}$
\end{center}
\end{definition}
The type $\stmod{\tau}$ corresponds to the probability monad over the
type $\tau$, while the type $\stmodc{\tau}$ corresponds to the
partiality monad over the type $\tau$. Besides the standard function
types $\stfun{\tau}{\sigma}$, the type language includes the unit
type, booleans, integers, reals and lists.

Relational types extend the grammar of simple types with relational
refinements, and use a dependent function type rather than standard function
types.
\begin{definition}
The sets of \emph{relational types} $\rtypes = \{ T, U, \ldots \}$ and
\emph{assertions} $\rassert = \{ \phi, \psi, \ldots \}$ are defined as
follows:

\begin{equation*}
\begin{array}{r@{\hspace{0.3em}}l}
    T, U \in \rtypes
      ::= & \coty{\tau} \vbar
              \rtmod{\epsilon}{\delta}{T} \vbar
              \rtmodc{T} \vbar
              \rtprod{x}{T}{T} \vbar
              \rtref{x}{T}{\phi}\\[.3em]
    \phi, \psi\in \rassert
      ::= &
              \begin{array}[t]{@{}l@{\hspace{1cm}}l@{}}
                \fquant{\quantvar}{x}{\tau}{\phi} & (x \in \pvars)\\[.3em]
                \rquant{\quantvar}{x}{T}{\phi} & (x \in \rvars)\\[.3em]
              \end{array}\\
           & \rformc(\phi_1, \ldots, \phi_n) \vbar
              \rmark{e} = \rmark{e} \vbar
              \rmark{e} \le \rmark{e}\\[.3em]
    \rformc = & \{ \sfrac{\rtrue}{0}, \sfrac{\rfalse}{0},
                     \sfrac{\neg}{1}, \sfrac{\vee}{2}, \sfrac{\wedge}{2},
                     \sfrac{\implies}{2} \} , 

  \end{array}
\end{equation*}
\noindent where $\epsilon, \delta, \rmark{e} \in \rexpr$,
and $\quantvar\in\{\forall, \exists\}$.
\end{definition}
The definitions of relational types and assertions are mutually
recursive. For the latter, the constructors $\rtmodc{.}$ and
$\rtmod{\epsilon}{\delta}{.}$ capture the partiality
monad and the probability polymonad for relational refinements.  The
type $\rtprod{x}{T}{U}$ corresponds to the dependent type product of
$T$ over $U$ indexed by $x$.  As usual, we write $\stfun{T}{U}$ for
$\rtprod{x}{T}{U}$ when $x$ does not occur free in $U$.  A type of the
shape $\rtref{x}{T}{\phi}$ \emph{refines} the type $T$ using the
assertion $\phi$.  In both dependent and refinement
types, we require the bound variable to be relational ($x\in\rvars$).

Assertions are built from primitive assertions using the standard connectives
and quantification; we allow quantification over both relational and plain
variables.  Primitive assertions are equalities and inequalities over relational
expressions.

The notions of substitutions are defined largely as usual.  Substitutions bind
\emph{pairs} of (non-relational) expressions to relational variables, and
involve a special treatment of \emph{refinement type constructors}, which must
handle the relational expressions.
For instance, for a substitution $\rho = \{ y \mapsto (e_1, e_2) \}$, we define

\begin{center}
  $\begin{array}{r@{\,}c@{\,}l}
     \rtref{x}{T}{\phi}\rho & = &
       \left\{ x:: T\rho \,\left\vert\, \phi\left\{
         \begin{array}{@{}l@{}}
           \l{y} \mapsto \l{\rembed{e_1}}\\
           \r{y} \mapsto \r{\rembed{e_2}}
         \end{array}
       \right\} \right. \right\} ,
   \end{array}$
\end{center}

\noindent where $\l{\rembed{e}}$ (resp. $\r{\rembed{e}}$) is obtained from
$e$ by replacing all the free variables $x$ of $e$ by $\l{x}$
(resp. $\r{x}$).

\subsection{Standard typing} \label{sec:semantics}
We define the simply typed layer of \THESYSTEM, and prove its
soundness w.r.t.\, a denotational semantics. Both the type
system and the semantics are mostly standard.

\paragraph{Static semantics}
The sole notable typing rules are for the two different $\kwletrec$~of our
language:
$$
  \inferrule*[left=LetRec]
    {\Gamma, f : \stfun{\tau}{\stmodc{\sigma}} \vdash \slam{x}{e} : \stfun{\tau}{\stmodc{\sigma}}}
    {\Gamma \vdash \sletrec{}{f}{x}{e} : \stfun{\tau}{\stmodc{\sigma}}}
$$
$$
   \inferrule*[left=LetRecSN]
    {\Gamma, f : \stfun{\tau}{\sigma} \vdash \slam{x}{e} : \stfun{\tau}{\sigma} \\
     \sncond}
    {\Gamma \vdash \ssnletrec{f}{x}{e} : \stfun{\tau}{\sigma} .}
$$
The rule \textsc{LetRec} handles unrestricted recursion, and requires
that the output type be in the partiality monad. On the contrary, the
rule \mbox{\textsc{LetRecSN}} does not impose any restriction on the
type, but the expression must pass a \emph{termination guard}. We leave the
termination guard unspecified; possible forms of enforcing the guard
include sized types and syntactic criteria.

\paragraph{Denotational semantics}
The denotational semantics is largely standard.  Core types are
interpreted in the standard way. The interpretation of types mixes a
set-theoretical and cpo semantics in order to accommodate the
partiality monad.
\begin{definition}[Interpretation of types] The \emph{interpretation}
$\tyinterp{\tau}$ of a type $\tau\in\pcfty$ is inductively defined as follows:
\begin{center}
 \begin{tabular}{ll}
  \multicolumn{2}{@{}c@{}}{%
    $\tyinterp{\stmod{\tau}}
       = \{ \mu : \tyinterp{\tau} \ra \mathbb{R}^+ \vbar
           \mathsf{dsupp}(\mu) \mbox{ discrete} \land\sum_{x\in\tau} \mu~x=1 \}$}\\[.5em]
  $\tyinterp{\stmodc{\tau}} = \dotted{\tyinterp{\tau}}$ &
  $\tyinterp{\stfun{\tau}{\sigma}} = \tyinterp{\tau} \cfun \tyinterp{\sigma} ,$
 \end{tabular}
\end{center}
\noindent where the \emph{support} $\mathsf{dsupp}(\mu)$ of a distribution $\mu$
is the set of elements for which $\mu$ takes a non-zero value, and $\cfun$
represents a cpo-continuous function space when the codomain is equipped with a
cpo structure, and the set-theoretical function space otherwise.
\end{definition}
Types $\stmodc{\tau}$ and $\sigma\rightarrow\tau$ where $\tau$ is interpreted as
a cpo are interpreted as cpos. However, types of the form $\stmod{\tau}$ are not
interpreted as cpos, because their interpretation is based on discrete
distributions.\footnote{%
  It would have been possible to interpret them as sub-distributions and to
  define another $\kwletrec$  operator for probabilistic computations, at the
  cost of replacing $\delta+\delta'$ by $\exp(\epsilon')~\delta +
  \exp(\epsilon)~\delta'$ in the typing rule for $\textsf{bind}$.  However, our
  examples do not require this additional generality.}

We can now define the denotational interpretation of expressions.
\begin{definition}
A \emph{valuation} $\theta$ is any finite map from $\vars$ to $\bigcup_\tau
\tyinterp{\tau}$.  A valuation $\theta$ \emph{validates} an environment
$\Gamma$, written $\theta \vDash \Gamma$, if $\forall x \in
\dom(\Gamma) ,\, x\theta \in \tyinterp{x\Gamma}$.
We denote by $\interp{\theta}{\Gamma \vdash e : \tau}$ the
interpretation of $\Gamma \vdash e : \tau$ with respect to $\theta
\vDash \Gamma$.
\end{definition}
The definition of the interpretation is mostly standard; \Cref{fig:interp} gives
the interpretation of the monadic constructions and of the two $\kwletrec$
operators.  As expected, the static semantics is sound w.r.t\, the denotational
one.
\begin{lemma}
 If $\Gamma \vdash e : \tau$ and $\theta \vDash \Gamma$, then
 $\interp{\theta}{e} \in \tyinterp{\tau}$.
\end{lemma}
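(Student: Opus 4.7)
The plan is to proceed by structural induction on the typing derivation $\Gamma \vdash e : \tau$. For the base cases, variables are handled directly by the hypothesis $\theta \vDash \Gamma$, and the constants $n$, $\alpha$, $\sone$, $\snil$, $\sfalse$, $\strue$ land trivially in the interpretation of their respective core types. The standard compound cases---$\lambda$-abstraction, application, $\kwlet$, $\sif{\cdot}{\cdot}{\cdot}$, case analysis on lists, and $\scons{\cdot}{\cdot}$---combine the induction hypotheses on the immediate subderivations in the expected way. The $\lambda$-abstraction case uses that $\tyinterp{\stfun{\tau}{\sigma}} = \tyinterp{\tau}\cfun\tyinterp{\sigma}$, which is either the set-theoretic function space or the cpo of continuous maps depending on whether $\tyinterp{\sigma}$ is a cpo; in the cpo case, a strengthened induction hypothesis (or a structural argument noting that each construct is built from continuous operations) gives continuity of the denotation in the bound variable.

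Next I would treat the monadic constructors. For $\sunitM{e}$ the Dirac distribution at $\interp{\theta}{e}$ has singleton support and total mass $1$, so it lies in $\tyinterp{\stmod{\tau}}$. For $\sbindM{x}{e_1}{e_2}$, the interpretation is the pointwise weighted sum $\lambda y.\,\sum_{x}\mu(x)\cdot\nu(x)(y)$ where $\mu$ and $\nu$ come from the induction hypothesis; the support of the result is a countable union of countable sets, hence still discrete, and a Fubini-style rearrangement gives total mass $1$. The partiality-monad cases are more straightforward: $\sunitC{e}$ uses the standard lift into $\dotted{\tyinterp{\tau}}$, and $\sbindC{x}{e_1}{e_2}$ is interpreted by strict continuous bind in the cpo $\dotted{\tyinterp{\sigma}}$, which lands in $\tyinterp{\stmodc{\sigma}}$ by continuity of the combinators.

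The two $\kwletrec$ rules are the most delicate. For \textsc{LetRec}, the codomain $\tyinterp{\stmodc{\sigma}}$ is a pointed cpo, so $\tyinterp{\stfun{\tau}{\stmodc{\sigma}}}$ is itself a pointed cpo of continuous functions under pointwise ordering; the induction hypothesis on the body yields a continuous functional $\Phi$ on this cpo, and I interpret $\sletrec{}{f}{x}{e}$ as $\mathsf{lfp}(\Phi)$, which exists in the cpo by Kleene's fixpoint theorem. For \textsc{LetRecSN} the codomain $\sigma$ need not be a cpo, so the Kleene construction is unavailable---this is where I expect the main difficulty. The intended reading is that the termination guard $\sncond$ supplies a well-founded measure (e.g.\ decreasing size, a syntactic structural criterion, or sized types) that justifies defining the interpretation by well-founded recursion on the measure, landing in the set-theoretic function space $\tyinterp{\tau}\cfun\tyinterp{\sigma}$. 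Since the excerpt leaves $\sncond$ abstract, the cleanest formulation is to prove the lemma relative to any instantiation of $\sncond$ that guarantees termination on every input; for any of the standard instantiations the interpretation is then recovered by the associated well-founded induction.
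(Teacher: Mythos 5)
The paper never actually proves this lemma: it is asserted (``As expected, the static semantics is sound\ldots'') right after the definitions, with no argument supplied, so your proposal can only be judged against the definitions themselves, in particular \Cref{fig:interp}. Your overall plan---structural induction, a Tonelli-style rearrangement for the mass and support of the probabilistic $\kwbind$, and the Kleene chain $\bigcup_{n} F^n(\bot)$ for \textsc{LetRec}---is the natural way to fill this gap and agrees with how the paper defines the interpretation.

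The genuine gap is the step you treat parenthetically: that continuity of denotations in cpo-typed variables follows from a strengthened induction hypothesis or from ``each construct being built from continuous operations''. Under the semantics as literally stated in the paper this is false---and, read literally, the lemma itself fails. The paper takes $\tyinterp{\tau}\cfun\tyinterp{\sigma}$ to be the \emph{full} set-theoretic function space whenever $\tyinterp{\sigma}$ is not a cpo. So for $\Gamma = h : \stfun{\stmodc{\stbool}}{\stbool}$, a valuation $\theta \vDash \Gamma$ may assign to $h$ a non-monotone function with $\theta(h)(\bot)=\strue$ and $\theta(h)(\strue)=\sfalse$; then $e = \slam{x}{\sif{h\,x}{\sunitC{\strue}}{\sunitC{\sfalse}}}$ has type $\stfun{\stmodc{\stbool}}{\stmodc{\stbool}}$, but $\interp{\theta}{e}$ sends $\bot\mapsto\strue$ and $\strue\mapsto\sfalse$, which is not monotone and hence not an element of $\tyinterp{\stmodc{\stbool}}\cfun\tyinterp{\stmodc{\stbool}}$, which consists of continuous maps. (If you prefer to regard base types as discrete cpos, rebuild the example with codomain $\stmod{\stbool}$---which the paper insists is not a cpo---and body $\sunitC{(h\,x)}$.) No induction hypothesis can rescue this, because the non-continuous value is supplied by $\theta$, not by any subderivation; the same problem blocks monotonicity of your functional $\Phi$ in the \textsc{LetRec} case, so the Kleene chain need not even be a chain. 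The missing idea, which your proof must make explicit, is a repair of the semantics: interpret \emph{every} type as a cpo (sets, including $\tyinterp{\stmod{\tau}}$, ordered discretely), interpret every arrow as continuous maps so that inhabitants of function types are hereditarily continuous, and then prove, mutually with the membership claim, that $\interp{\theta}{e}$ is continuous in each entry of $\theta$. A secondary point: for \textsc{LetRecSN} you quietly replace the paper's interpretation (which applies the same formula $\bigcup_n F^n(\bot_{\tyinterp{\stfun{\tau}{\sigma}}})$ to both $\kwletrec$s, invoking a bottom element that need not exist when $\sigma$ is not a cpo) by well-founded recursion on the guard; that is the sensible reading, but you should flag it as a redefinition rather than a proof about the interpretation as given.
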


\begin{myfigure}
 \begin{tabular}{r@{$\;$}l@{\;}l@{\hspace{1cm}}l}
   $\interp{\theta}{\Gamma \vdash \sunitC{e} : \stmodc{\tau}}$
    & = & $\interp{\theta}{\Gamma \vdash e : \tau}$\\[.5em]
   $\interp{\theta}{\Gamma \vdash \sbindC{x}{e_1}{e_2} : \stmodc{\sigma}}$
    & = &\\[.3em]
    & \multicolumn{2}{@{\hspace{-2.5cm}}l}
         {$\left\{\begin{array}{@{\,}l@{\hspace{.3cm}}l}
        \bot & \mbox{if $d = \bot$}\\
        \interp{\mapx{\theta}{x}{d}}{\Gamma \vdash e_2 : \stmodc{\sigma}} & \mbox{otherwise}
      \end{array}\right.$}\\[.7em]
   & \multicolumn{2}{@{\hspace{-0.7cm}}l}
       {where $d = \interp{\theta}{\Gamma \vdash e_1 : \stmodc{\tau}}$}\\[1em]
   $\interp{\theta}{\Gamma \vdash \sunitM{e} : \stmod{\tau}}$
    & = &
      $x \mapsto \left\{\begin{array}{@{}r@{\,}l}
          1 & \mbox{ if $x = \interp{\theta}{e}$}\\
          0 & \mbox{ otherwise}
       \end{array}\right.$\\[1em]
   $\interp{\theta}{\Gamma \vdash \sbindM{x}{e_1}{e_2} : \stmod{\sigma}}$
    & = &\\[.3em]
    & \multicolumn{2}{@{\hspace{-2.5cm}}l}{%
          $d \mapsto \sum_{g \in \tyinterp{\tau}}
                              (\interp{\theta}{e_1}(g)
                       \times \interp{\mapx{\theta}{x}{g}}{e_2}(d))$}\\[.7em]
   $\interp{\theta}{\Gamma \vdash \sletrec{\markvar}{f}{x}{e} : \stfun{\tau}{\sigma}}$
    & = & $\bigcup_{n \in \nats} F^n(\bot_{\tyinterp{\stfun{\tau}{\sigma}}})$\\[.5em]
    & \multicolumn{2}{@{\hspace{-3.3cm}}l}
         {where $F(d) = \interp
           {\mapx{\theta}{f}{d}}
           {\Gamma, f : \stfun{\tau}{\sigma} \vdash
              \slam{x}{e} : \stfun{\tau}{\sigma}}$}\\[.5em]
 \end{tabular}
 \caption{\label{fig:interp}Interpretation of PCF Expressions}
\end{myfigure}

\subsection{Refinement Typing}
\label{subsec:refinement}
The key point of relational typing is its ability to relate a
\emph{pair} of expressions---which we call the \emph{left} and
\emph{right} expressions---via relational assertions that appear as
refinements in types. For instance, the type
\begin{center}
  $\rtprod{x}{\stnat}{\rtref{y}{\stnat}{\l{y} = \l{x} + 1 \wedge \r{y} = \r{x}}}$
\end{center}
represents a pair of integer to integer functions where the left function
adds $1$ to argument, and the right one returns its argument untouched.

In this section, we define the refinement type system of \THESYSTEM in three
steps.  First, we give an interpretation for assertions and refinement types.
Second, we define a subtyping relation that is complete w.r.t.\, this
interpretation. Finally, we define the refinement type system, and prove its
soundness w.r.t.\, a denotational semantics.

We start by defining relational contexts.
\begin{definition}
A \emph{relational environment} $\renv{G}$ is any finite sequence of
relational bindings $(x :: T)$ s.t. a variable is never bound twice
and only variables of $\rvars$ are bound. We use $\emptyset$ to denote
the empty environment. A relational environment defines a finite
mapping from variables to relational types; we write $x\renv{G}$ for
the application of the finite map $\renv{G}$ to $x$.
\end{definition}
We define a \emph{type erasure function} $\rembed{\cdot}$ from relational to
simple types, which maps dependent products to function
spaces, and erases refinements and the indexes of the probabilistic
monad. The definition of $\rembed{\cdot}$ extends recursively to
relational environments: $x\rembed{\renv{G}} = \rembed{x\renv{G}}$ for
any $x \in \dom(\renv{G})$.  We also define the \emph{relational type
erasure of $\renv{G}$}, written $\rrembed{\renv{G}}$, by
$x_\svar\rrembed{\renv{G}} = x\rembed{\renv{G}}$ iff $x\in
\dom(\renv{G})$, where $\svar \in \{ \lvmark, \rvmark \}$. Note that
given a relational binding $(x::T)$, the relational type erasure
$\rrembed{(x::T)}$ gives the environment
$(\l{x}:\rembed{T},\r{x}:\rembed{T})$.

Next, we interpret assertions and refinement types.
\begin{definition}[Relational interpretation of refinement types]
 We say that a valuation $\theta$ \emph{validates a relational environment}
 $\renv{G}$, written $\theta \VDash \renv{G}$, if $\theta \vDash
 \rrembed{\renv{G}}$ and $\forall x \in \dom(\renv{G})$,
 $(\l{x}\theta, \r{x}\theta) \in \rinterp{\theta}{x\renv{G}}$.
 \Cref{fig:finterp,fig:rinterp} define the \emph{relational
 interpretation} $\interp{\theta}{\phi} \in \{ \top, \bot \}$ (resp.
 $\rinterp{\theta}{T} \in \tyinterp{\rembed{T}}^2$) of an
 assertion $\phi$ (resp. of a relational type $T$) w.r.t
 a valuation $\theta \vDash \Gamma$ (resp. $\theta \vDash \rrembed{\renv{G}}$).
\end{definition}

\begin{figure}
  \begin{center}
    $\begin{array}{r@{\;}c@{\;}l}
      \interp{\theta}{\rformc(\phi_1, \ldots, \phi_n)} & = &
        \overline{\rformc}(\interp{\theta}{\phi_1}, \ldots, \interp{\theta}{\phi_n})\\[.5em]
      \interp{\theta}{\rmark{e}_1 = \rmark{e}_2} & = &
        \interp{\theta}{\rmark{e}_1} = \interp{\theta}{\rmark{e}_2}\\[.5em]
      \interp{\theta}{\rmark{e}_1 \le \rmark{e}_2} & = &
        \interp{\theta}{\rmark{e}_1} \le \interp{\theta}{\rmark{e}_2}\\[.5em]
      \interp{\theta}{\fquant{\forall}{x}{\tau}{\phi}} & = &
        \bigwedge_{d \in \tyinterp{\tau}} \interp{\mapx{\theta}{x}{d}}{\phi}\\[.5em]
      \interp{\theta}{\rquant{\forall}{x}{T}{\phi}} & = &
        \bigwedge_{(d_1, d_2) \in \rinterp{\theta}{T}} \interp{\rmapx{\theta}{x}{d_1}{d_2}}{\phi}\\[.5em]
      \interp{\theta}{\fquant{\exists}{x}{\tau}{\phi}} & = &
        \bigvee_{d \in \tyinterp{\tau}} \interp{\mapx{\theta}{x}{d}}{\phi}\\[.5em]
      \interp{\theta}{\rquant{\exists}{x}{T}{\phi}} & = &
        \bigvee_{(d_1, d_2) \in \rinterp{\theta}{T}} \interp{\rmapx{\theta}{x}{d_1}{d_2}}{\phi}\\[.5em]
    \end{array}$
  \end{center}

 \noindent where $\overline{\rformc}$ stands for the $\rformc$-boolean operator.

 \caption{\label{fig:finterp} Relational interpretation of assertions}
\end{figure}

Assertions are interpreted relationally in the expected way where some
care is needed for quantifiers since the interpretation distinguishes
between binders for relational and plain variables.
Relational types are interpreted as sets of pairs of elements of the
interpretation of the erased type. Formally, a pair $(d_1,d_2)$ is in
the relational interpretation of a refinement $\rtref{x}{T}{\phi}$ if
the assertion $\phi$ holds in a relational context where $d_1$ and
$d_2$ are assigned to $\l{x}$ and $\r{x}$, respectively.

The relational interpretation of the dependent product is defined in a
\emph{logical relation} style: it relates function elements $f_1,f_2$
that map related elements $d_1,d_2$ (in $\rinterp{\theta}{T}$) to
related elements (in $\rinterp{\rmapx{\theta}{x}{d_1}{d_2}}{U}$). A
monadic type $\rtmodc{T}$ is relationally interpreted as the set of
pairs in the interpretation of $T$ plus the pair $(\bot,\bot)$. The
polymonadic type $\rtmod{\epsilon}{\delta}{T}$ is interpreted using
a lifting construction $\lift{\epsilon,\delta}{\cdot}$ that turns a relation $\Psi$
on $T_1\!\times\!T_2$ into a relation $\lift{\epsilon,\delta}{\Psi}$
on $\tmod{T_1}\!\times\!\tmod{T_2}$.

\begin{definition}[Lifting of a relation]
  Given $\Psi\subseteq T_1\!\times\!T_2 $, we have
  $\mathrel{\lift{\epsilon,\delta}{\Psi}}~\mu_1~\mu_2$ iff there is a
  distribution $\mu\in\tmod{T_1\times T_2}$ such that
  \begin{enumerate}
    \item $\mu\, (a,b) > 0$ implies $(a,b)\in \Psi$, 
    \item $\pi_1\, \mu\leq\mu_1\land \pi_2\, \mu\leq \mu_2$, and
    \item $\Delta_{\epsilon}(\mu_1, \pi_1\,\mu)\leq \delta ,
      \land\Delta_{\epsilon}(\mu_2, \pi_2\,\mu)\leq \delta$ ,
  \end{enumerate}
where $\pi_1\, \mu = \lambda x. \sum_{y} \mu\, (x,y)$ and $\pi_2\, \mu = \lambda y. \sum_{x} \mu\, (x,y)$.
\end{definition}

\begin{figure}
 \begin{mathpar}
   \inferrule
     { (d_1, d_2) \in \tyinterp{\coty{\tau}}^2 }
     { (d_1, d_2) \in \rinterp{\theta}{\coty{\tau}} }

   \inferrule
     { (d_1, d_2) \in \rinterp{\theta}{T} \\
       \interp{\rmapx{\theta}{x}{d_1}{d_2}}{\phi} }
     { (d_1, d_2) \in \rinterp{\theta}{\rtref{x}{T}{\phi}} }

   \inferrule
     { (f_1, f_2) \in \tyinterp{\rembed{T} \ra \rembed{U}}^2 \\
       \forall (d_1, d_2) \in \rinterp{\theta}{T} .\,
         (f_1(d_1), f_2(d_2)) \in \rinterp{\rmapx{\theta}{x}{d_1}{d_2}}{U} }
     { (f_1, f_2) \in \rinterp{\theta}{\rtprod{x}{T}{U}} }

   \inferrule
     { (d_1, d_2) \in \rinterp{\theta}{T} \cup \{ (\bot, \bot) \} }
     { (d_1, d_2) \in \rinterp{\theta}{\rtmodc{T}} }

   \inferrule
     { \mu_1, \mu_2 \in \stmod{\rembed{T}} \\\\
       \lift{\epsilon, \delta}{\rinterp{\theta}{T}}\ \mu_1\ \mu_2 }
     { (\mu_1, \mu_2) \in \rinterp{\theta}{\rtmod{\epsilon}{\delta}{T}} }
 \end{mathpar}

 \caption{Relational interpretation of types \label{fig:rinterp}}
\end{figure}

\medskip

We next define subtyping between refinement types.
\begin{definition}[Subtyping]
 The \emph{subtyping relation} $\renv{G} \vdash T \tyle U$ is defined by the
 rules of \Cref{fig:subty}.
\end{definition}
A subtyping judgment $\renv{G} \vdash T \tyle U$ relates only
relational types that erase to the same simple type, i.e.
$\rembed{T}=\rembed{U}$. The rules \textsc{Sub-left} and
\textsc{Sub-right}  allow erasing and reinforcing
refinements. The rule \textsc{Sub-M} allows weakening the indices of
the probabilistic polymonad, and the underlying refinements. Other
rules are mostly standard. The definition of subtyping validates the
relational counterpart of the equivalence \textsc{Fun-Sub} discussed
in \Cref{sec:overview}. More generally, it is possible to define a
normalization function that converts any refinement type $T$ into an
equivalent type $\rtref{x}{U}{\phi}$, where $U$ is a simple type,
i.e.\, does not contain any refinement. The existence of the
normalization function immediately entails semantic subtyping.

Finally, we present the \THESYSTEM~typing rules.

\begin{definition}[Relational Typing]
The \emph{refinement typing relation} $\renv{G} \vdash e_1 \sim e_2 :: T$ is
defined in \Cref{fig:relty}. We use $\Gamma \vdash e :: T$ as a shorthand for
$\Gamma \vdash e \sim e :: T$.
\end{definition}

\begin{figure}
 \begin{mathpar}
  \inferrule*[left=Sub-Refl]
   {\renv{G} \vdash T}
   {\renv{G} \vdash T \tyle T}

  \inferrule*[left=Sub-Trans]
   {\renv{G} \vdash T \tyle U \\\\
    \renv{G} \vdash U \tyle V}
   {\renv{G} \vdash T \tyle V}

  \inferrule*[left=Sub-List]
   { \renv{G} \vdash T \tyle U}
   {\renv{G} \vdash \stlist{T} \tyle \stlist{U}}

  \inferrule*[left=Sub-C]
   {\renv{G} \vdash T \tyle U}
   {\renv{G} \vdash \rtmodc{T} \tyle \rtmodc{U}}

  \inferrule*[left=Sub-Left]
   {\renv{G} \vdash \rtref{x}{T}{\phi}}
   {\renv{G} \vdash \rtref{x}{T}{\phi} \tyle T}

  \inferrule*[left=Sub-Right]
   {\renv{G} \vdash T \tyle U \\
    \rrembed{\renv{G}, x :: U} \vdash \phi \\\\
    \forall \theta .\, \theta \VDash \renv{G},x::T \implies \interp{\theta}{\phi}}
   {\renv{G} \vdash T \tyle \rtref{x}{U}{\phi}}

  \inferrule*[left=Sub-M]
   {\renv{G} \vdash T \tyle U \\
    \rrembed{\renv{G}} \vdash \epsilon_i : \rplusinfty \\
    \rrembed{\renv{G}} \vdash   \delta_i : \rplusinfty \\
    \forall \theta .\, \theta \VDash \renv{G},x::T \implies
    \interp{\theta}{       \epsilon_1 \le \epsilon_2 < \infty
                    \wedge \delta_1   \le \delta_2   < \infty}}
   {\renv{G} \vdash \rtmod{\epsilon_1}{\delta_1}{T} \tyle \rtmod{\epsilon_2}{\delta_2}{U}}

  \inferrule*[left=Sub-Prod]
   {\renv{G} \vdash T_2 \tyle T_1 \\
    \renv{G}, x :: T_2 \vdash U_1 \tyle U_2}
   {\renv{G} \vdash \rtprod{x}{T_1}{U_1} \tyle \rtprod{x}{T_2}{U_2}}
 \end{mathpar}

 \caption{\label{fig:subty} Relational Subtyping}
\end{figure}

\begin{figure*}
 \begin{mathpar}
  \inferrule*[left=Var]
   {x \in \dom(\renv{G})}
   {\renv{G} \vdash x :: x\renv{G}}

  \inferrule*[left=Abs]
   {\renv{G}, x :: T \vdash e :: U}
   {\renv{G} \vdash \slam{x}{e} :: \rtprod{x}{T}{U}}

  \inferrule*[left=App]
   {\renv{G} \vdash e_1 :: \rtprod{x}{T}{U} \\
    \renv{G} \vdash e_2 :: T}
   {\renv{G} \vdash e_1\ e_2 :: U\{ x \mapsto e_2 \}}


  \inferrule*[left=LetRecSN]
   {\renv{G}, f :: \rtprod{x}{T}{U} \vdash \slam{x}{e} :: \rtprod{x}{T}{U} \\\\
    \renv{G} \vdash \rtprod{x}{T}{U} \\
    \sncond}
   {\renv{G} \vdash \ssnletrec{f}{x}{e} :: \rtprod{x}{T}{U}}

  \inferrule*[left=LetRec]
   {\renv{G} \vdash \rtprod{x}{T}{\stmodc{U}} \\\\
    \renv{G}, f :: \rtprod{x}{T}{\stmodc{U}} \vdash \slam{x}{e} :: \rtprod{x}{T}{\stmodc{U}}}
   {\renv{G} \vdash \sletrec{}{f}{x}{e} :: \rtprod{x}{T}{\stmodc{U}}}

  \inferrule*[left=Case]
   { \renv{G} \vdash T \\
     \renv{G} \vdash e :: \stlist{\coty{\tau}} \\
     \forall \theta .\, \theta \VDash \renv{G} \implies
         \interp{\theta}{
                           (\l{\rembed{e}} = \snil)
           \Leftrightarrow (\r{\rembed{e}} = \snil)} \\\\
     \renv{G} \vdash e_1 :: T \\
     \renv{G},
         x :: \coty{\tau}, y :: \stlist{\coty{\tau}},
         \{ \l{\rembed{e}} = \l{x} :: \l{y} \wedge
            \r{\rembed{e}} = \r{x} :: \r{y} \}
       \vdash e_2 :: T }
    { \renv{G} \vdash \sifseq{e}{x}{y}{e_2}{e_1} :: T }

  \inferrule*[left=UnitC]
   {\renv{G} \vdash e :: T}
   {\renv{G} \vdash \sunitC{e} :: \rtmodc{T}}

  \inferrule*[left=BindC]
   {\renv{G} \vdash e_1 :: \rtmodc{T_1} \\\\
    \renv{G} \vdash \rtmodc{T_2} \\
    \renv{G}, x :: T_1 \vdash e_2 :: \rtmodc{T_2}}
   {\renv{G} \vdash \sbindC{x}{e_1}{e_2} :: \rtmodc{T_2}}

  \inferrule*[left=UnitM]
   {\rrembed{\renv{G}} \vdash \epsilon : \rplusinfty \\
    \rrembed{\renv{G}} \vdash \delta   : \rplusinfty \\
    \renv{G} \vdash e :: T}
   {\renv{G} \vdash \sunitM{e} :: \rtmod{\epsilon}{\delta}{T}}

  \inferrule*[left=BindM]
   {\renv{G} \vdash e_1 :: \rtmod{\epsilon_1}{\delta_1}{T_1} \\\\
    \renv{G} \vdash \rtmod{\epsilon_2}{\delta_2}{T_2} \\
    \renv{G}, x :: T_1 \vdash e_2 :: \rtmod{\epsilon_2}{\delta_2}{T_2}}
   {\renv{G} \vdash \sbindM{x}{e_1}{e_2} :: \rtmod{\epsilon_1 + \epsilon_2}{\delta_1 + \delta_2}{T_2}}

  \inferrule*[left=Sub]
   {\renv{G} \vdash e :: T \\
    \renv{G} \vdash T \tyle U}
   {\renv{G} \vdash e :: U}

   \inferrule*[left=ARedLeft]
   {e_1 \red e'_1 \\
    \renv{G} \vdash e_1 \sim e_2 :: T}
   {\renv{G} \vdash e'_1 \sim e_2 :: T}

  \inferrule*[left=ACase]
   { \renv{G} \vdash T \\
     \rembed{\renv{G}} \vdash e : \stlist{\coty{\tau}} \\
     \rembed{\renv{G}} \vdash e' : \rembed{T} \\\\
     \renv{G}, \{ \l{\rembed{e}} = \snil \} \vdash e_1 \sim e' :: T \\
     \renv{G},
       x :: \coty{\tau}, y :: \stlist{\coty{\tau}},
         \{ \l{\rembed{e}} = \l{x} :: \l{y} \}
       \vdash e_2 \sim e':: T }
    { \renv{G} \vdash \sifseq{e}{x}{y}{e_2}{e_1} \sim e' :: T }
 \end{mathpar}

 \caption{\label{fig:relty} Relational Typing (Selected Rules)}
\end{figure*}

We briefly comment on some of the typing rules. As in relational Hoare
logic~\cite{Benton04}, we distinguish between synchronous and
asynchronous rules; the latter operate on both expressions of the
judgments, whereas the former operate on a single expression and can
relate expressions that have different shapes. Synchronous rules exist
for the two monads, the two $\kwletrec$ and the dependent product.
Note that the rule for application substitutes the argument of the
application into the result type, and does not impose any value
restriction.

The $\kwcase$ construction is an example of a rule with both a synchronous and
an asynchronous version. The synchronous rule requires a synchronicity
condition: the same branch must be taken in the left and right expressions. For
the case of lists, this is ensured by requiring that the matched lists are
either both empty or both non-empty. In contrast, the asynchronous rule does not
require this condition. The reduction rules close typing under reduction, and is
useful to relate expressions that do not have the same shape.

Refinement typing is sound w.r.t.\, its denotational semantics.
\begin{theorem}[Soundness]\label{thm:soundness}
If $\renv{G} \vdash e_1 \sim e_2 :: T$, then for every valuation
$\theta\models\renv{G}$ we have
$(\interp{\theta}{e_1},\interp{\theta}{e_2})\in \rinterp{\theta}{T}$.
\end{theorem}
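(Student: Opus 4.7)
The plan is to proceed by induction on the derivation of $\renv{G} \vdash e_1 \sim e_2 :: T$, handling each rule of \Cref{fig:relty} in turn. Three auxiliary results must be established first: (i) a \emph{substitution lemma} relating $\rinterp{\theta}{T\{x \mapsto (e_1, e_2)\}}$ with $\rinterp{\rmapx{\theta}{x}{\interp{\theta}{e_1}}{\interp{\theta}{e_2}}}{T}$; (ii) \emph{soundness of subtyping}, namely that $\renv{G} \vdash T \tyle U$ together with $(d_1,d_2) \in \rinterp{\theta}{T}$ imply $(d_1,d_2) \in \rinterp{\theta}{U}$, proved by induction on the subtyping derivation (the only nontrivial case is \textsc{Sub-M}, which relies on monotonicity of $\lift{\epsilon,\delta}{\cdot}$ in both indices); and (iii) a \emph{composition theorem} for lifting stating that if $\mathrel{\lift{\epsilon_1, \delta_1}{\Psi}}~\mu_1~\mu_2$ and every $(a,b) \in \Psi$ satisfies $\mathrel{\lift{\epsilon_2, \delta_2}{\Phi}}~(f_1\,a)~(f_2\,b)$, then the bind of $\mu_i$ with $f_i$ is related by $\lift{\epsilon_1 + \epsilon_2, \delta_1 + \delta_2}{\Phi}$, the witness distribution being built by gluing the two witness families.

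For the base and synchronous rules the argument is routine. \textsc{Var} is immediate from $\theta \models \renv{G}$; \textsc{Abs} unfolds $\rinterp{\theta}{\rtprod{x}{T}{U}}$ and applies the induction hypothesis under $\rmapx{\theta}{x}{d_1}{d_2}$; \textsc{App} combines the IH on the function with the substitution lemma to align $U\{x \mapsto e_2\}$ with the dependent result. The synchronous \textsc{Case} relies on its synchronicity premise to force the same branch on both sides, while \textsc{ACase} performs a case analysis on $\l{\interp{\theta}{e}}$ alone, using each branch hypothesis against the same right expression $e'$. \textsc{UnitC} and \textsc{BindC} are handled by straightforward case analysis on whether intermediate values are $\bot$. \textsc{UnitM} is direct from the definition of lifting (the Dirac witness works), and \textsc{BindM} is exactly the composition theorem above. \textsc{ARedLeft} relies on a standard preservation lemma that $\interp{\theta}{e_1} = \interp{\theta}{e'_1}$ when $e_1 \red e'_1$.

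The main obstacle will be the two $\kwletrec$ rules, whose conclusions require relating the fixed points $\bigcup_n F_i^n(\bot)$ of the two Kleene iterations. The strategy is an auxiliary induction on $n$ showing that $(F_1^n(\bot), F_2^n(\bot)) \in \rinterp{\theta}{\rtprod{x}{T}{\rtmodc{U}}}$ for every $n$: the base case exploits that $(\bot,\bot) \in \rinterp{\theta}{\rtmodc{U}}$ by definition, and the step case applies the IH from the premise to the previously-related iterates regarded as bindings for $f$. Passing from the finite approximants to their supremum requires proving an \emph{admissibility} property: the relational interpretation of a dependent function type whose codomain lives in the partiality monad is closed under pointwise directed suprema. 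This is proved by a structural induction on the relational type, with the $\rtmodc{\cdot}$ clause being the essential one since refinements at higher types are reduced via the normalization construction used for semantic subtyping. The \textsc{LetRecSN} variant is then handled identically, except that the termination guard $\sncond$ is what permits the conclusion to live outside of $\rtmodc{\cdot}$, so admissibility in the partiality monad is not needed. Applying \textsc{Sub} composed with the subtyping soundness lemma completes the induction.
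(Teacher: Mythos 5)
The paper itself contains no textual proof of \Cref{thm:soundness}: it only reports that the theorem ``has been formalized in the Coq proof assistant, assuming an axiomatization of probabilities and lifting,'' so there is no detailed argument to compare yours against. Judged on its own terms, your overall architecture is the natural one and is consistent with the paper's stated design: induction on the typing derivation; a substitution lemma for the dependent \textsc{App} rule; soundness of subtyping proved per rule (with \textsc{Sub-M} resting on monotonicity of $\lift{\epsilon,\delta}{\cdot}$ in $\epsilon$, $\delta$ and in the underlying relation); and a composition theorem for lifting carrying the entire weight of \textsc{BindM}, which matches the paper's remark that ``all reasoning about probabilities is confined \ldots to the proof of soundness of the monadic rules.'' The \textsc{UnitM}, \textsc{BindC}, \textsc{Case}/\textsc{ACase}, and \textsc{ARedLeft} cases are described correctly.

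The genuine gap is in your treatment of \textsc{LetRec}. You reduce it to the claim that $\rinterp{\theta}{\rtprod{x}{T}{\rtmodc{U}}}$ is closed under pointwise directed suprema, proved ``by structural induction on the relational type,'' with refinements dispatched ``via the normalization construction used for semantic subtyping.'' Normalization does not help: it merely converts a type into $\rtref{x}{U'}{\phi}$ with $U'$ refinement-free, after which admissibility of $\rinterp{\theta}{\rtref{x}{U'}{\phi}}$ still requires the \emph{assertion} $\phi$ to denote a chain-closed predicate, and that can fail. Assertions may compare relational expressions at partial types; consider $\phi \equiv \fquant{\exists}{n}{\stnat}{(\l{f}\,n = \Omega)}$, where $\Omega$ is a closed diverging expression of simple type $\stmodc{\stnat}$, refining the relational function type $U_0$ from the diagonal refinement of $\stnat$ to $\rtmodc{\stnat}$. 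A recursive definition such as $f\,x = {(\lambda y.\, \sifseq{y}{y'}{\_}{\sbindC{g}{f\,x}{g\,y'}}{\sunitC{0}})}_\uparrow$ has Kleene approximants whose $y$-component is defined exactly on $\{0,\dots,k-1\}$: each approximant satisfies $\phi$ (it diverges somewhere) and the approximants are pairwise related at $U_0$, yet the supremum is total and violates $\phi$. So the set $\rinterp{\theta}{\rtmodc{\rtref{f}{U_0}{\phi}}}$ is not chain-closed, and your admissibility lemma cannot be proved by induction on types alone. Closing the gap requires an additional idea: restrict assertions appearing under $\rtmodc{\cdot}$ in a \textsc{LetRec} conclusion to admissible ones, build such a condition into the well-formedness judgment $\renv{G} \vdash T$, or impose it as an explicit hypothesis of the rule---exactly the kind of side condition the Coq axiomatization would have had to confront. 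A related, smaller imprecision: for \textsc{LetRecSN} you say admissibility ``is not needed,'' but you still must pass from the approximants to the fixed point; there the argument is that the (unspecified) termination guard forces every pointwise chain to stabilize at a finite stage, which is a different argument from admissibility, not an instance of your lemma.
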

It follows that \THESYSTEM accurately models differential privacy.
\begin{corollary}[Differential Privacy]
  If $\vdash e :: \rtref{x}{\sigma}{\Phi} \rightarrow
  \rtmod{\epsilon}{\delta}{\rtref{y}{\tau}{\l{y}=\r{y}}}$
  then $\interp{}{e}$ is $(\epsilon,\delta)$-differentially private
  w.r.t.\, adjacency relation $\interp{}{\Phi}$.
\end{corollary}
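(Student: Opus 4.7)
The plan is to derive the corollary by unwinding the relational interpretation of the stated refinement type and then invoking the characterization of $\lift{\epsilon,\delta}{=}$ in terms of $\epsilon$-distance. Concretely, I would proceed as follows.

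First, I would apply the Soundness Theorem (\Cref{thm:soundness}) to the hypothesis $\vdash e :: \rtref{x}{\sigma}{\Phi} \rightarrow \rtmod{\epsilon}{\delta}{\rtref{y}{\tau}{\l{y}=\r{y}}}$, using the empty valuation (which trivially validates the empty environment). Soundness yields $(\interp{}{e},\interp{}{e}) \in \rinterp{}{\rtprod{x}{\{x::\sigma\vbar\Phi\}}{\rtmod{\epsilon}{\delta}{\rtref{y}{\tau}{\l{y}=\r{y}}}}}$. Unfolding the relational interpretation of the dependent product (\Cref{fig:rinterp}) gives: for every pair $(d_1,d_2)\in\tyinterp{\sigma}^2$ with $\interp{\rmapx{\theta}{x}{d_1}{d_2}}{\Phi}=\top$, i.e.\ $(d_1,d_2)\in\interp{}{\Phi}$, we have $(\interp{}{e}(d_1),\interp{}{e}(d_2))\in \rinterp{}{\rtmod{\epsilon}{\delta}{\rtref{y}{\tau}{\l{y}=\r{y}}}}$.

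Second, I would unfold the interpretation of the monadic type: this says that the two output distributions $\mu_1 = \interp{}{e}(d_1)$ and $\mu_2 = \interp{}{e}(d_2)$ satisfy $\lift{\epsilon,\delta}{R}~\mu_1~\mu_2$, where $R = \rinterp{}{\rtref{y}{\tau}{\l{y}=\r{y}}}$. By definition of the relational interpretation of a refinement type, $R$ is the diagonal (equality) relation on $\tyinterp{\tau}$.

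Third, I would invoke the lifting characterization recalled in \Cref{subsec:refinement} (citing \cite{POPL:BKOZ12}): two distributions are related by $\lift{\epsilon,\delta}{=}$ if and only if their $\epsilon$-distance $\Delta_\epsilon$ is at most $\delta$. This follows from the definition of lifting specialized to the diagonal relation: the witness $\mu$ concentrates on pairs $(a,a)$, so $\pi_1\mu$ and $\pi_2\mu$ coincide on the diagonal and bounding $\Delta_\epsilon(\mu_i,\pi_i\mu)\leq\delta$ collapses to $\Delta_\epsilon(\mu_1,\mu_2)\leq\delta$. Once this bound is established, the definition of $\Delta_\epsilon$ immediately gives, for every measurable $E\subseteq\tyinterp{\tau}$, $\Pr_{y\leftarrow\mu_1}[y\in E]\le \exp(\epsilon)\Pr_{y\leftarrow\mu_2}[y\in E] + \delta$, which is exactly the definition of $(\epsilon,\delta)$-differential privacy of $\interp{}{e}$ with respect to $\interp{}{\Phi}$.

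The only nontrivial step is the lifting characterization, but this is a known result taken from prior work and is not proved here; the rest is bookkeeping through the definitions of \Cref{fig:rinterp}. So the corollary is essentially a one-line consequence of soundness plus the standard equivalence between lifted equality and $\epsilon$-distance.
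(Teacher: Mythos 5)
Your proposal is correct and matches the paper's intended argument: the paper gives no explicit proof, presenting the corollary as an immediate consequence of \Cref{thm:soundness} ("It follows that..."), with exactly your chain of reasoning---unfold the relational interpretation of the dependent product and the polymonadic type, note that the refinement $\rtref{y}{\tau}{\l{y}=\r{y}}$ denotes equality, and invoke the cited property of lifting from \citet{POPL:BKOZ12} that relates $\lift{\epsilon,\delta}{=}$ to the bound $\Delta_\epsilon \leq \delta$, which is the definition of $(\epsilon,\delta)$-differential privacy given in \Cref{sec:overview}. Your additional sketch of why the lifting characterization holds goes slightly beyond what the paper records (it simply cites prior work), but it is consistent with it and does not change the structure of the proof.
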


We have completed a formalization of Theorem 3.1 in the Coq proof
assistant, assuming an axiomatization of probabilities and lifting.

\subsection{Type-checking}
We have implemented a type-checker for \THESYSTEM. The type-checker
generates proof obligations during type-checking; proof obligations
are sent to SMT solvers via Why3. The type-checker uses a ML-like
syntax and includes a few practical extensions like inductive
datatypes, let expressions, as well as the ability to define logical
predicates and core theories for the datatypes.

All the programs presented in \Cref{sec:dp} and \Cref{sec:auctions}, as well as
some additional examples from the DP literature (private histograms, sums, two
level counters and IDC) were automatically type checked by the implementation,
with the only help of top-level type annotations. See \Cref{tab:examples} for a
summary.

Both the type-checker and the Coq formalization are available at
\url{https://github.com/ejgallego/HOARe2/}.
\begin{table}
  \centering
\begin{tabular}{|c|r|r|}
  \hline
  Example & \# Lines & Verif. time \\
  \hline
  \hline
  {\tt histogram    } &  25 & 2.66 s.\\
  {\tt dummysum     } &  31 & 11.95 s.\\
  {\tt noisysum     } &  55 & 3.64 s.\\
  {\tt two-level-a  } &  38 & 2.55 s.\\
  {\tt two-level-b  } &  56 & 3.94 s.\\
  {\tt binary       } &  95 & 18.56 s.\\
  {\tt idc          } &  73 & 27.60 s.\\
  {\tt dualquery    } & 128 & 27.71 s.\\
  {\tt competitive-b} &  81 & 2.80 s.\\
  {\tt competitive  } &  75 & 4.19 s.\\
  {\tt fixedprice   } &  10 & 0.90 s.\\
  {\tt summarization} & 471 & 238.42 s.\\
  \hline
\end{tabular}
\caption{Benchmarks \label{tab:examples}}
\end{table}

\section{Embedding \DFuzz} \label{sec:dfuzz}
\DFuzz~\citep{GaboardiHHNP13} is a linear dependently typed language
that has been used to verify many examples of differential private
algorithms. In this section, we define a type-preserving embedding
from \DFuzz into \THESYSTEM, and recover soundness of \DFuzz from
\Cref{thm:soundness}. The embedding is interesting for several
reasons. First, it shows that \THESYSTEM is sufficiently expressive to
capture all differentially private examples covered by \DFuzz. Second,
it relates two previously disconnected approaches for verifying
differential privacy. Third, it shows how relational refinements
can internalize logical relations.

For compactness, we only consider a terminating fragment of \DFuzz
with probabilities only over real numbers.
Types and expressions are defined in \Cref{fig:dfuzz-terms}; both are
parameterized by indexes, drawn from two distinct languages. The first
one deals with \emph{sensitivities} (interpreted as elements of
$\rplusinfty$) and the second one deals with \emph{sizes} (interpreted as
natural numbers). Typing judgments are of the form $
\tctxo;\cso;\ctxo\dprov \tmo:\tyo $, where $\tctxo$ is an environment
that records the sensitivity and size variables, $\cso$ is a set of
constraints used in pattern matching, and $\ctxo$ is an environment
containing assignments of the form
$x:!_{R}\tau$. \Cref{fig:dfuzz-typing} gives selected typing rules,
where 
environments are combined by algebraic operations. The environment
$R\cdot \Gamma$ is obtained by taking $x:!_{R\cdot R_i}\typetwo$ for every
$x:!_{R_i}\typetwo\in \Gamma$, while environment addition is defined
as:
\begin{equation*}
  \begin{array}{l}
    (x: !_{R_1}\tyo, \ctxo) +    ( x: !_{R_2}\tyo,\ctxw) = x:!_{R_1 +
      R_2}\tyo, (\ctxo + \ctxw) \\
    (x: !_{R}\tyo, \ctxo) +    \ctxw = x:!_{R}\tyo, (\ctxo + \ctxw)
   \qquad\text{if}\, x\notin\dom(\ctxw)
    \\
\ctxo +   (x: !_{R}\tyo, \ctxw) = x:!_{R}\tyo, (\ctxo + \ctxw)
\qquad \text{if}\, x\notin\dom(\ctxo) .
  \end{array}
\end{equation*}
We refer to~\citet{GaboardiHHNP13} for definitions
and further explanation of the typing rules.

\begin{figure}
$$ \begin{array}[t]{lrl@{\hspace{-2cm}}r} \kappa & ::= & \realone \mid
    \natone & \text{(kinds)}\\ S & ::= & i \mid 0 \mid S + 1 &
    \text{(sizes)} \\ R & ::= & i \mid S \mid \R^{\ge 0} \mid \infty
    \mid R + R \mid R \cdot R
      &
      \text{(sensitivities)}\\
      \tyo,\tyw & ::= &
              \R
              \mid \R[R]
              \mid \stlist{\tyo}[S]
              \mid !_{{R}} \tyo \multimap \tyw
              & \text{(types)}
              \\
&\mid& \forall i : \kappa.\, \tyo
\mid \stmod{\R}\\
        \tmo   & ::= & x
                \mid \N
                \mid \R^{\ge 0} \mid \lambda x.\,\tmo
                & \text{(expressions)}
              \\ & \mid &
                \sletrec{}{f}{x}{\tmo} 
                \mid e_1\; e_2
              \mid
                \Lambda i  .\, \tmo
                \mid \tmo[R]
              \\ & \mid &
                \scase{e} {[\snil \Rightarrow e_1 \mid x :: xs_{[i]} \Rightarrow e_2]}
                \\
       \ctxo, \ctxw & ::= & \emptyset   \mid \ctxo, x : !_R\tyo& \text{(environments)}\\
       \tctxo, \tctxw & ::= & \emptyset \mid \tctxo, i : \kappa & \text{(sens. environments)}\\
       \cso, \csw & ::= & \emptyset   \mid S = 0 \mid S = i + 1 \mid \cso,\cso & \text{(constraints)}\\
  \end{array}$$
  \caption{\DFuzz Types and Expressions}
  \label{fig:dfuzz-terms}
\end{figure}

\begin{figure}
  \centering
\begin{mathpar}
  \inferrule
        { }
  { \tctxo;\cso;\Gamma, x : !_1 \typeone \dprov  x : \typeone } \and
  \inferrule
    { \tctxo;\cso;\Gamma, x: !_R \typeone \dprov  e : \typetwo }
  { \tctxo;\cso;\Gamma \dprov  \lambda x .\, e : !_R \typeone \multimap \typetwo }   \and
  \inferrule
  { \tctxo;\cso;\Gamma \dprov  e_1 : !_R \typeone \multimap \typetwo \\ \tctxo;\cso;\Gamma' \dprov  e_2 : \typeone }
       { \tctxo;\cso;\Gamma + R\cdot\Gamma' \dprov e_1\,e_2 : \typetwo }      \and
  \inferrule
  {\tctxo;\cso; \Gamma \dprov  e : \R }
       { \tctxo;\cso;\infty\cdot \Gamma  \dprov \breturn e:
         \stmod{\R} }
\and
  \inferrule
  {\tctxo;\cso; \Gamma, f: !_\infty (!_R \tyo \lin\tyw), x: !_R \tyo  \dprov  e_1 : \tyw }
       { \tctxo;\cso;\infty\cdot \Gamma \dprov \sletrec{}{f}{x}{\tmo} : !_{R_2} \tyo \lin\tyw }     \and
  \inferrule
  {\tctxo,i:\kappa;\cso; \Gamma, f: !_\infty (\forall i:\kappa.\, \tyo)
  \dprov  e_1 : \tyo \and \text{$i$ fresh in $\cso, \Gamma$ }}
       { \tctxo;\cso;\infty\cdot \Gamma \dprov \sletrec{}{f}{i}{\tmo}
         : \forall i:\kappa.\, \tyo }     \and
  \inferrule
  { \tctxo;\cso;\Gamma \dprov  e_1 : \stmod{\typeone} \\
    \tctxo;\cso;\Gamma', x: !_{\infty} \typeone \dprov  e_2 :\stmod{\typetwo} }
       {\tctxo;\cso; \Gamma + \Gamma' \dprov \slet{x}{e_1}{e_2} : \stmod{\typetwo} }     \and
  \inferrule
    { \tctxo, i : \kappa;\cso; \ctxo \dprov e : \tyo \\\\ \text{$i$ fresh in $\cso, \ctxo$}}
  {  \tctxo;\cso; \ctxo \dprov  \Lambda i:\kappa .\, e : \forall i:\kappa.\,\tyo } \and
  \inferrule
    { \tctxo;\cso;\ctxo \dprov e : \forall i : \kappa.\,\tyo \\\\ \tctxo \models S : \kappa}
  { \tctxo;\cso ; \ctxo \dprov e[S] : \tyo[S/i] } \and
  \inferrule
  { \tctxo;\cso ; \ctxw \dprov e : \stlist{\tyo}[S] \\
    \tctxo ;\cso, S = 0    ; \ctxo \dprov e_l : \tyo \\\\
    \tctxo, i : \natone ;\cso, S = i + 1 ; \ctxo, x : !_R \tyo,xs: !_R\stlist{\tyo}[i] \dprov e_r
  : \tyo}
  { \tctxo;\cso; \ctxo + R \cdot \ctxw \dprov \scase{e}{[\snil \Rightarrow e_l \mid
    x::xs_{[i]}  \Rightarrow e_r ]}: \tyo}
\end{mathpar}
\caption{\DFuzz typing rules}
\label{fig:dfuzz-typing}
\end{figure}

In \DFuzz, types $\tyo, \tyw$ are interpreted as metric spaces, with
associated metrics $d_\tyo, d_\tyw$. Then, the \DFuzz type system
enforces metric preservation~\cite{ReedP10,GaboardiHHNP13}: if $e$ is
well typed in context $\Gamma$, for arbitrary closing substitutions
$\theta_1, \theta_2$ for $\Gamma$, the distance between the
interpretations of $\theta_1(e)$ and $\theta_2(e)$ is upper bounded by
the distance between $\theta_1$ and $\theta_2$. As a particular
instance, \DFuzz expressions of type $!_{R}\tyo \lin\tyw$ correspond
to \emph{$R$-sensitive} functions, i.e.\, functions $f$ such that for
every pairs of inputs $v_1$ and $v_2$,
$d_{\tyw}(f\, v_1,f\, v_2)\leq R\cdot d_{\tyo}(v_1,v_2)$.
We will present an embedding which captures metric preservation as a
relational refinement type.

To this end, we first define the multiplication operation on sensitivities in
more detail. We distinguish two sorts: sensitivities $\R_s = \R^{>0} \cup \{ 0_s,
\infty \}$ and distances $\R_d = \R^{>0} \cup \{ 0_d, \bot \}$.  We interpret
sensitivities $R$ in \DFuzz as sensitivities, while metrics (\Cref{fig:dist})
are interpreted as distances.  We write $s$ and $d$ to range over the respective
sorts. To interpret multiplication, we define a associative and commutative
operator $\diamond$ that maps $\R_s \times \R_s \rightarrow \R_s$ and $\R_d
\times \R_s \rightarrow \R_d$.  The non-standard cases are those involving $0_s,
0_d$ and $\infty, \bot$: $$
\begin{array}{ccc}
s \diamond \infty = \infty &\quad&
d \diamond \infty =
  \begin{cases}
    0_d &\quad \text{if }  d = 0_d \\
    \bot &\quad \text{otherwise}
  \end{cases} \\[4mm]
\bot \diamond s = \bot &&
r \diamond r' = r \cdot r' \quad \text{if }  r, r' \in \R^+ .
\end{array}
$$
Note that $\diamond$ with type $\R_d \times \R_d \to \R_d $ is never used.

Then, the expression
$\disttwo_{\typeone}: \eraseDP{\typeone} \times \eraseDP{\typeone}
\rightarrow \R_d$
will capture the distance function for the \DFuzz type $\typeone$ on
our embedding, where $\eraseDP{\cdot}$ is the erasure function from
\DFuzz types to simple types.
For the sake of readability, we define $\disttwo_{\typeone}$ in usual
mathematical style (see \Cref{fig:dist}), where $\mathsf{sz}$ denotes the
length of a list.
\begin{figure}
\begin{align*}
\disttwo_\R(d_1,d_2) &= |d_1-d_2| \\
\disttwo_{\stlist{\typeone}[S]}(d_1,d_2) &= 
    \begin{cases}
      \displaystyle{\sum_{i\leq n}\disttwo_{\typeone} (d_1^i,d_2^i)}
      &\text{if}\ \mathsf{sz}(d_1)=
      \mathsf{sz}(d_2)=S \\
      \bot &\text{otherwise}
    \end{cases} \\
\disttwo_{\R[R]}(d_1,d_2) &=
    \begin{cases}
      0 &\text{if}\ d_1= d_2=R \\
      \bot &\text{otherwise} \\
    \end{cases} \\
\disttwo_{!_R \typeone\lin \typetwo}(d_1,d_2) &=
  \displaystyle{\max_{d_3,d_4\in \eraseDP{\typeone}}}
  \left(%
  \begin{array}[c]{l}
    \disttwo_{\typetwo}d_1\, d_3, d_2\, d_4) \\ - R \diamond \disttwo_{\typeone} (d_3,d_4)
  \end{array}\right) \\
\disttwo_{\stmod{\R}}(\mu_1,\mu_2) &= \max_{x \in \R}
  \left| \ln \left( \frac{\mu_1(x)}{\mu_2(x)} \right) \right| \\
\disttwo_{\forall i:\kappa.\, \tyo}(d_1, d_2) &= \max_{d \in \kappa}
    \disttwo_{\tyo[d/i]}(d_1\, d, d_2\, d) 
\end{align*}
\caption{Metric induced by \DFuzz types}\label{fig:dist}
\end{figure}

\begin{figure}
$$
\begin{array}{r@{\;}c@{\;}l}
(\R[\sensitermone])^* &= &
\rtref{x}{\R{}}{\l{x}=\r{x} = \sensitermone^*} \\
(\stlist{\tyo}[\sizeitermone])^* &=&
\rtref{x}{\stlist{\eraseDP{\tyo}}}{
\mathsf{sz} (\l{x}) = \mathsf{sz} (\r{x}) = \sizeitermone^*}
\\
(!_R \tyo\lin \typetwo)^* & = & {\tyo^*}\rightarrow{\typetwo^*}
\\
(\forall i : \kappa.~\tyo)^* & = & \rtprod{i}{\{ x::\kappa^*\ |\
  \l{x}=\r{x}\}}{\tyo^*} \\
\stmod{\R}^* &= &\rtmod{0}{0}{\R}
\end{array}
$$
\caption{Translation of \DFuzz types}\label{fig:emb}
\end{figure}

The translation $\cdot^*$ is first defined on sensitivities and sizes, then
on types and expressions, and finally on environments. Sensitivities
and sizes are translated directly as expressions of type $\R_s$
and $\mathbb{N}$ respectively. More interestingly, the translation for
types (given in \Cref{fig:emb}) uses refinements and dependent
products to capture size and sensitivity information. The translation
of expressions is straightforward; the only interesting case are:
$$
(\termone[\sizeitermone])^* = \termone^*\, \sizeitermone^*
\qquad (\Lambda \sizeivarone . \termone)^* =\lambda
\sizeivarone:\kappa.\termone^*
$$
\vspace{-5mm}
\begin{multline*}
(\scase{\termone}{[\snil\to \termone_1\mid
  x::xs_{[\sizeivarone]} \to \termone_2]})^*\\
= \scase{\termone^*}{[\snil\to \termone_1^*\mid
  x::xs \to \termone_2^* \{ \mathsf{sz}(xs)/\sizeivarone\} ]} .
\end{multline*}
The translation of environments is defined inductively. We have
$(\emptyset)^* = \emptyset$ for terms, index environments, and
constraints. Moreover, we define:
$$\begin{array}{rcl}
(\Gamma, x : !_R \typeone)^* & = &  \Gamma^*, x :: \typeone^* \\
(\phi, i : \kappa)^* & = & \phi^*, i ::\rtref{x}{\kappa^*}{\l{x}=\r{x}} \\
(\Phi, S_1=S_2) & = & \Phi^*, \rtref{\_}{\mathbb{B}}{S_1^* = S_2^*} .
\end{array}
$$
Note that the translation of size and sensitivity environments
requires the equivalence of the left and right instances of the
relational variables.  We can show soundness of the embedding.
\begin{theorem}[\DFuzz~Embedding]
\label{thm:dfuzz-translation}
  If $\phi;\iconstraintone;\Gamma\dprov e: \typetwo$ then
  \[
    \phi^*,\iconstraintone^*,\Gamma^* \vdash e^* ::\{ y::  \typetwo^*\ |\ \disttwo_{\typetwo}(\l{y},\r{y})\leq
    \disttwo_\Gamma(\l{\Gamma^*},\r{\Gamma^*})\} ,
  \]
where
$
\disttwo_\Gamma(\l{\Gamma^*},\r{\Gamma^*})
=\sum_{x:!_R \tyo\in \Gamma} R\diamond\disttwo_{\tyo} (\l{x}, \r{x})
$.
\end{theorem}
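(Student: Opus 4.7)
The plan is to proceed by induction on the \DFuzz typing derivation $\tctxo;\cso;\ctxo \dprov e : \typetwo$. For each rule we construct a corresponding \THESYSTEM derivation of $\tctxo^*,\cso^*,\ctxo^* \vdash e^* :: \{ y :: \typetwo^* \mid \disttwo_\typetwo(\l{y},\r{y}) \leq \disttwo_\ctxo\}$ using the rules of Figure \ref{fig:relty}, appealing to the subsumption rule \textsc{Sub} to install the distance refinement on the result type. A preliminary step is a small calculus of lemmas about $\diamond$ and $+$ on $\R_s,\R_d$ (associativity, distributivity of $\diamond$ over $+$, monotonicity, absorption by $\infty$ and $\bot$) that the SMT backend of Why3 must have available when discharging verification conditions.

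The heart of the argument is the linear application rule. By IH we obtain $e_1^* :: \{ f :: \typeone^* \to \typetwo^* \mid \disttwo_{!_R\typeone\lin\typetwo}(\l{f},\r{f}) \leq \disttwo_{\ctxo}\}$ and $e_2^* :: \{ y :: \typeone^* \mid \disttwo_{\typeone}(\l{y},\r{y}) \leq \disttwo_{\ctxw}\}$. Unfolding the definition of $\disttwo_{!_R\typeone\lin\typetwo}$ yields for every $(d_3,d_4)$ the inequality $\disttwo_{\typetwo}(\l{f}\,d_3,\r{f}\,d_4) \leq R \diamond \disttwo_{\typeone}(d_3,d_4) + \disttwo_{!_R\typeone\lin\typetwo}(\l{f},\r{f})$. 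Instantiating with $(\l{e_2^*},\r{e_2^*})$ and applying \textsc{App} gives exactly the distance bound $R \diamond \disttwo_{\ctxw} + \disttwo_{\ctxo}$, which coincides with $\disttwo_{\ctxo + R\cdot\ctxw}$ by definition of the environment metric. The index abstraction and application rules are analogous, using the constrained type $\rtref{x}{\kindone^*}{\l{x}=\r{x}}$ to synchronize the left and right instances of an index argument, so that $\disttwo_{\forall i:\kindone.\,\tyo}$ reduces to $\disttwo_{\tyo[S/i]}$ after specialization.

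For the monadic fragment, $\breturn e$ is typed via \textsc{UnitM} with indices $(0,0)$ to obtain $\rtmod{0}{0}{\rtref{y}{\R}{\l{y}=\r{y}}}$, which, by the characterization of lifting noted in Section \ref{sec:overview}, matches $\disttwo_{\stmod{\R}}(\mu_1,\mu_2) = 0$; the $\infty\cdot\ctxo$ discipline ensures this holds because the induction hypothesis delivers $|\l{e^*}-\r{e^*}| \leq \disttwo_{\infty\cdot\ctxo}$, which is either $0_d$ or $\bot$, and $\bot$ is contradictory. The $\slet{x}{e_1}{e_2}$ rule uses \textsc{BindM}, with additivity of $(\epsilon,\delta)$-indices mirroring the additivity of distances under environment composition. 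The \textsc{Case} rule uses the synchronous relational case from Figure \ref{fig:relty}: the synchronicity precondition is met because the scrutinee $e^*$ has refined list type $\stlist{\cdot}[\sizeitermone^*]$ with matching lengths on the two sides, so both sides take the same branch; each branch is typed by IH, and the context split $\ctxo + R\cdot\ctxw$ is handled exactly as in the application case. The two $\kwletrec$ cases go through \textsc{LetRecSN} (and \textsc{LetRec} for the non-terminating extension), using that the $\infty\cdot\ctxo$ constraint on free variables forces $\disttwo_{\ctxo}$ to collapse to $0_d$, making the recursive definition relate to itself.

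The main obstacle is the careful treatment of the extremal values of $\diamond$ (namely $\infty,\bot,0_s,0_d$): several HOARe$^2$ subtyping obligations reduce to non-trivial arithmetic identities on $\R_s,\R_d$ (for instance, $\bot \diamond s = \bot$ and $d \diamond \infty = \bot$ unless $d = 0_d$), and these must be established globally so that each application of \textsc{Sub} that weakens a distance bound $R_1$ to $R_2$ with $R_1 \leq R_2$ is discharged uniformly. A secondary subtlety is showing that $\disttwo_\typetwo$ is itself definable as an assertion in the logic used by \THESYSTEM for each \DFuzz type built from the grammar of Figure \ref{fig:dfuzz-terms}, which requires an auxiliary induction on types establishing that $\disttwo_\typetwo(\l{y},\r{y})$ unfolds into a refinement-expressible relational expression; this is routine except at the arrow and $\forall$ types, where it relies on the admissibility of higher-order refinements in \THESYSTEM. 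With these two ingredients in place, every inductive case of the main theorem becomes a stereotyped pattern of IH, a single structural rule of Figure \ref{fig:relty}, and a final subsumption step.
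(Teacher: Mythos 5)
Your proposal follows essentially the same route as the paper's proof: induction on the \DFuzz derivation, with the abstraction/application and $\forall$ cases handled by the semantic-subtyping equivalence that internalizes the metric $\disttwo_{!_R\typeone\lin\typetwo}$ as a higher-order relational refinement (your ``unfold and instantiate'' step is exactly this \textsc{Fun-Sub}-style conversion), the probabilistic case handled via the lifting characterization at indices $(0,0)$, and pattern matching kept synchronous because the refinement on the translated scrutinee forces both runs into the same branch. The only cosmetic slips are in the $\breturn$ case, where the induction hypothesis bounds the distance by $\disttwo_{\ctxo}$ rather than $\disttwo_{\infty\cdot\ctxo}$, and where $\disttwo_{\infty\cdot\ctxo}=\bot$ makes the goal vacuously true rather than the hypotheses contradictory.
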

\begin{proof}
By induction on the derivation of $\phi;\iconstraintone;\Gamma\dprov
e: \typetwo$.  The cases for abstraction and application rely on the
properties of semantics subtyping. In particular, using subtyping we
can pass from
\begin{multline*}
\Pi  (x:: \typeone^*).\{ y:: \typetwo^* \ |\
\disttwo_{\typetwo}(\l{y},\r{y})\leq\\
      \disttwo_{\Gamma,x:!_R\sigma}((\l{\Gamma^*},\l{x}::\tyo^*),(\r{\Gamma^*},\r{x}::\tyo^*))\}
\end{multline*}
to
$$
          \{ f::
            \rtprod{x}{
              \typeone^*
            }{
              \typetwo^*
            }
          \ |\
            \disttwo_{!_R\typeone \lin \typetwo} (\l{f}, \r{f}) \leq \disttwo_{\Gamma}(\l{\Gamma^*},\r{\Gamma^*})
        \}
$$
and vice versa.
Higher order relational refinements are crucial here for internalizing the
metric of \DFuzz at the function type that is essentially a logical
relation argument---this equivalence shows that \emph{two functions
are related if they map related inputs to related outputs}.

Similarly, for the probability distribution case, remembering that we
consider only distributions over base types, we use subtyping to
pass from 
$$
      \rtmod{0}{0}{\{ y::\R\mid \distone_{\R}(y_1,y_2)\leq
          \disttwo_\Gamma(\l{\Gamma^*},\r{\Gamma^*})\}}
$$
to 
$$
\{z::\tmod{\R}\mid \mathcal{L}_{0,0}\Big (\distone_{\R}(y_1,y_2)\leq
          \disttwo_\Gamma(\l{\Gamma^*},\r{\Gamma^*})\Big )\; \l{z}\; \r{z} \}
$$
and vice versa.

The other cases are similar. Interestingly, the case of pattern matching does
not require asynchronous reasoning. Indeed, the refinement type of the
translation of the term under match ensures that the two runs will take the same
branch.
\end{proof}
\noindent Hence typable \DFuzz expressions are differentially
private.
\begin{corollary} If $\emptyset;\emptyset;x:!_\epsilon \typetwo \dprov e:
  \stmod{\mathbb{R}}$ then 
$$
  \rtref{x}{\tau^*}{\disttwo(\l{x},\r{x})\leq 1} \vdash e^*::
  \rtmod{\epsilon}{0}{\rtref{y}{\mathbb{R}}{\l{y}=\r{y}}} .$$
\end{corollary}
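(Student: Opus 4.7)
The plan is to instantiate Theorem 4.1 (the \DFuzz embedding) on the given \DFuzz derivation and then repackage its conclusion using subtyping and the characterization of lifting by $\epsilon$-distance.

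First I would apply Theorem~\ref{thm:dfuzz-translation} directly to the hypothesis $\emptyset;\emptyset;x:!_\epsilon \typetwo \dprov e:\stmod{\mathbb{R}}$. Unwinding the definitions of $(\cdot)^*$ on environments and on $\stmod{\mathbb{R}}$ yields
\[
x :: \typetwo^* \vdash e^* :: \bigl\{\, y :: \rtmod{0}{0}{\R} \,\bigm|\, \disttwo_{\stmod{\R}}(\l{y},\r{y}) \le \epsilon \diamond \disttwo_{\typetwo}(\l{x},\r{x}) \,\bigr\} .
\]
Here I used that the environment distance $\disttwo_{\Gamma}$ on the single binding $x:!_\epsilon\typetwo$ reduces to $\epsilon\diamond \disttwo_{\typetwo}(\l{x},\r{x})$.

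Next I would strengthen the context from $x :: \typetwo^*$ to the refined context $\rtref{x}{\typetwo^*}{\disttwo(\l{x},\r{x})\le 1}$; this is just an application of \textsc{Sub-Left} combined with the standard weakening/strengthening of contexts in the refinement system. Under this strengthened context the premise of the refinement is available, so by the definition of $\diamond$ we have $\epsilon \diamond \disttwo_{\typetwo}(\l{x},\r{x}) \le \epsilon$, and by \textsc{Sub-Right} (consequence) the conclusion can be weakened to
\[
\rtref{x}{\typetwo^*}{\disttwo(\l{x},\r{x})\le 1} \vdash e^* :: \bigl\{\, y :: \rtmod{0}{0}{\R} \,\bigm|\, \disttwo_{\stmod{\R}}(\l{y},\r{y}) \le \epsilon \,\bigr\}.
\]

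The main step is to convert this external refinement on the probability monad into the internal polymonadic indices $(\epsilon,0)$ with inner refinement $\l{y}=\r{y}$. For this I use the key property of the lifting construction recalled in Section~\ref{sec:overview}: two distributions $\mu_1,\mu_2$ satisfy $\lift{\epsilon,\delta}{=}\ \mu_1\ \mu_2$ iff their $\epsilon$-distance is at most $\delta$. Unfolding the semantic interpretation in Figure~\ref{fig:rinterp}, an element $\mu$ of $\rtmod{\epsilon}{0}{\rtref{y}{\R}{\l{y}=\r{y}}}$ is precisely a pair of distributions related by $\lift{\epsilon,0}{=}$, which by the characterization is the same condition as $\disttwo_{\stmod{\R}}(\l{\mu},\r{\mu}) \le \epsilon$ (noting $\disttwo_{\stmod{\R}}$ is exactly the max-divergence). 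This gives the semantic equivalence
\[
\bigl\{\, y :: \rtmod{0}{0}{\R} \,\bigm|\, \disttwo_{\stmod{\R}}(\l{y},\r{y}) \le \epsilon \,\bigr\} \;\simeq\; \rtmod{\epsilon}{0}{\rtref{y}{\R}{\l{y}=\r{y}}} ,
\]
and semantic subtyping (\textsc{Sub-Right}/\textsc{Sub-M}) lets me promote this into a derivable subtyping judgment. Applying \textsc{Sub} delivers the desired statement. The main obstacle, and the only nontrivial point, is ensuring this last equivalence is really available as a subtyping derivation rather than only semantically; this is precisely where the semantic subtyping property of \THESYSTEM is used, and is exactly the kind of internalization the embedding theorem was designed to support.
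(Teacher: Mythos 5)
Your proposal is correct and follows essentially the same route as the paper's own proof: instantiate \Cref{thm:dfuzz-translation}, strengthen the context with the refinement $\disttwo_\tau(\l{x},\r{x})\leq 1$ and weaken the resulting bound to $\epsilon$, then use the characterization of $\lift{\epsilon,0}{=}$ by $\epsilon$-distance together with semantic subtyping to repackage the external refinement on $\rtmod{0}{0}{\R}$ as the polymonadic type $\rtmod{\epsilon}{0}{\rtref{y}{\R}{\l{y}=\r{y}}}$. The only cosmetic difference is that the paper makes the intermediate assertion $\Delta_\epsilon(\l{y},\r{y})\leq 0$ explicit (the ``elementary reasoning about probabilities'' converting max-divergence into $\epsilon$-distance), whereas you fold that conversion into the lifting characterization step.
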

\begin{proof}
By \Cref{thm:dfuzz-translation},
$$
x:: \tau^* \vdash e^*::
\rtref{y}{\rtmod{0}{0}{\mathbb{R}}}{
  \disttwo_{\stmod{\mathbb{R}}}{}(\l{y},\r{y}) \leq
  \disttwo_{\Gamma}(\l{x},\r{x})}
$$
with $\Gamma \equiv x:!_\epsilon\typetwo$.
Let $U=\rtref{x}{\tau^*}{\disttwo_{\tau}(\l{x},\r{x})\leq 1}$.
By definition of $\disttwo$ and elementary reasoning about
probabilities,
$$x:: U \vdash e^*:: \rtref{y}{\rtmod{0}{0}{\mathbb{R}}}{
\Delta_\epsilon(\l{y},\r{y}) \leq 0} .
$$
Finally, by semantic subtyping:
$$
x:: U \vdash e^*::
\rtmod{\epsilon}{0}{\rtref{y}{\mathbb{R}}{\l{y}=\r{y}}}  .
$$
\end{proof}
Moreover, \Cref{thm:dfuzz-translation} and \Cref{thm:soundness} give a
direct proof of metric preservation for \DFuzz. 
\begin{theorem}[\DFuzz~Metric Preservation~\cite{GaboardiHHNP13}]
  If $\phi;\iconstraintone;\Gamma\dprov e: \tau$ and $\theta\models
  \phi,\iconstraintone$ and $\theta_1,\theta_2 \models \interp{\theta}{\Gamma}$, then
  \[
    \distone_{\interp{\theta}{\tau}}(\interp{\theta_1,\theta}{e^*},\interp{\theta_2,\theta}{e^*})\leq
    \distone_{\interp{\theta}{\Gamma}}(\theta_1,\theta_2) .
  \]
\end{theorem}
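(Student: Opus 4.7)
The plan is to chain the embedding theorem (Theorem~\ref{thm:dfuzz-translation}) with relational soundness (Theorem~\ref{thm:soundness}), then unwrap the resulting refinement. First, I apply Theorem~\ref{thm:dfuzz-translation} to the given \DFuzz derivation to obtain the \THESYSTEM judgment
$$\phi^*,\iconstraintone^*,\Gamma^*\vdash e^*::\{y::\tau^*\mid \disttwo_\tau(\l{y},\r{y})\leq \disttwo_\Gamma(\l{\Gamma^*},\r{\Gamma^*})\}.$$
This already encodes metric preservation as a relational refinement, so the remainder is semantic.

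Next, from the data $(\theta,\theta_1,\theta_2)$ in the hypothesis I build a relational valuation $\theta^\sharp$ as follows: each plain index or size variable is sent to its value under $\theta$, and each term variable $x$ of $\Gamma$ is sent to the pair $(x\theta_1, x\theta_2)$ in its relational slot. I then verify $\theta^\sharp \VDash \phi^*, \iconstraintone^*, \Gamma^*$. The translations $(i:\kappa)^*$ and $(S_1=S_2)^*$ require equalities of left and right instances; these hold because $\theta$ assigns a single value to every plain variable. The translated term environment $\Gamma^*$ requires $(x\theta_1,x\theta_2)\in \rinterp{\theta^\sharp}{\tau^*}$ for each $x:!_R\tau\in\Gamma$; this reduces, by a straightforward induction on \Cref{fig:emb}, to the statement that the relational interpretation of a translated \DFuzz type $\tau^*$ is the full square on $\tyinterp{\eraseDP{\tau}}$ once the size refinements are satisfied, which in turn follows from $\theta_1,\theta_2\models \interp{\theta}{\Gamma}$.

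Third, I apply Theorem~\ref{thm:soundness} at $\theta^\sharp$ to get $(\interp{\theta^\sharp}{e^*},\interp{\theta^\sharp}{e^*})\in \rinterp{\theta^\sharp}{\{y::\tau^*\mid \cdots\}}$. Unfolding the refinement clause of \Cref{fig:rinterp} yields
$$\disttwo_\tau(a_1,a_2)\;\leq\;\disttwo_\Gamma(\l{\Gamma^*}\theta^\sharp,\r{\Gamma^*}\theta^\sharp),$$
where $(a_1,a_2)$ is the pair of interpretations of $e^*$. By construction of $\theta^\sharp$, $a_1 = \interp{\theta_1,\theta}{e^*}$ and $a_2 = \interp{\theta_2,\theta}{e^*}$, and the right-hand side is exactly $\distone_{\interp{\theta}{\Gamma}}(\theta_1,\theta_2)$ after expanding $\disttwo_\Gamma = \sum_{x:!_R\sigma\in\Gamma} R\diamond\disttwo_\sigma(\l{x},\r{x})$ pointwise.

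The main obstacle is the second step: producing $\theta^\sharp$ and discharging the three validation conditions, especially showing that the translated sensitivity and constraint environments are satisfied and that the relational interpretation of a translated type $\tau^*$ really is the full square on $\tyinterp{\eraseDP{\tau}}$ (so that the chosen pairs do lie in it). A secondary bookkeeping point is reconciling the two metric notations $\distone$ (on the semantic type) and $\disttwo$ (indexed by the syntactic type) from \Cref{fig:dist}; these agree by an easy induction on $\tau$, and that identification is what turns the inequality produced by soundness into the one stated in the theorem.
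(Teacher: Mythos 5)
Your proposal is correct and is essentially the paper's own argument: the paper proves metric preservation in one line, by observing that \Cref{thm:dfuzz-translation} composed with \Cref{thm:soundness} gives it directly, with $\theta$ handling the (shared) index variables and $\theta_1,\theta_2$ supplying the related left/right values. Your construction of the combined relational valuation and the unfolding of the refinement's relational interpretation is exactly the bookkeeping the paper leaves implicit.
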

Notice that the above theorem uses three valuations. The valuation $\theta$ is
used for index variables which are equal in the two executions. The other two
valuations $\theta_1,\theta_2$ are used to substitute related values in the two
executions.


\section{Differential Privacy} \label{sec:dp}
\Cref{thm:dfuzz-translation} establishes that every differentially
private algorithm that can be modeled in \DFuzz\ is also captured by
\THESYSTEM. In addition, we present a few previously unverified
algorithms demonstrating the features of our system.

In what follows, we will use some notational shorthands. We write
\lstinline{mlet/munit}, \lstinline{clet/cunit} for the bind/return operations of
the probabilistic and partiality monad. We write $\reals^\geq$ for the type
$\rtref{x}{\reals}{\l{x} \geq \r{x}}$. When a relational variable $x$ is assumed
to be equal in both runs ($\l{x} = \r{x}$), we omit the projection and write $x$
for both $\l{x}$ and $\r{x}$.

\subsection{Private Primitives}
We review two differentially private mechanisms that are used in the
next algorithms. The first mechanism is the \emph{Laplace
  mechanism}~\citep{BDMN05}, which releases a private version of a
numeric value (which can differ in the two runs) by adding noise drawn
from the Laplace distribution.

Formally, the $\epsilon$-private Laplace mechanism takes a real number
$x$ as input and returns $x + \nu$, where $\nu$ is random noise drawn from the Laplace
distribution, which has density function
\[
  F(\nu) = \frac{\epsilon}{2} \exp \left( - \epsilon |\nu|\right) .
\]
If $x$ can differ by at most $s$ in adjacent runs, then the Laplace mechanism is
$(\epsilon s, 0)$ differentially private.  We model this as an operator
\lstinline{lap} with the following type:
\[
\rtprod{x}{\reals}{\rtmod{\epsilon ~|\l{x}-\r{x}|}{0}{\rtref{u}{\reals}{\l{u} =
      \r{u}}}} .
\]

When the output range is non-numeric, a typical tool from differential privacy
is the \emph{Exponential mechanism} \citep{MT07}. Let $B$ be the output range, and
suppose there is a \emph{quality score} $Q : B \rightarrow A \rightarrow
\mathbb{R}$. On an input $a$, the exponential mechanism produces $b \in B$
approximately maximizing $Q\ b\ a$. If for every $b \in B$ and adjacent $a_1,
a_2 \in A$ (wrt. an adjacency relation $\Phi$), the quality score $Q$ satisfies the condition
\[
  |Q\ b\ a_1 - Q\ b\ a_2| \leq s ,
\]
then the Exponential mechanism satisfies $(\epsilon s, 0)$-differential privacy.
We model this as an operator $\lstt{expmech}$ of type
$$\rtref{a}{A}{\l{a}\ \Phi\ \r{a}} \rightarrow \mathfrak{S} \rightarrow
\rtmod{\epsilon s}{0}{\rtref{b}{B}{\l{b}=\r{b}}}$$
with the type $\mathfrak{S}$ of score functions defined as
\begin{equation*}
  \begin{array}[b]{l@{\hspace{0.16em}}l}
    \rtref{b}{B}{\!\l{b}\!=\!\r{b}}&
    \rightarrow\!\rtref{a}{A}{\!\l{a}\ \Phi\ \r{a}}\\
    &\rightarrow\!\rtref{r}{\reals}{\!|\l{r} - \r{r}|\!\leq\!s}
\end{array}
.
\end{equation*}


\subsection{Dual Query Release}
We first focus on the problem of privately answering a large set of queries. The
Laplace mechanism is a simple solution, but it's known that this will add noise
to each query proportional to $\sqrt{k}$ for $k$ queries under
$(\epsilon,\delta)$-privacy. When $k$ is large, the large noise will make the
released answers completely useless.
Fortunately, there is a line of algorithms where noise is added in a carefully
correlated manner, guaranteeing privacy while adding noise proportional only to
$\log k$. We have verified the privacy of one such algorithm, called
$\mathsf{DualQuery}$~\citep{GaboardiAHRW14}. The algorithm is parameterized by a
natural number \lstinline{s} and a set \lstinline{qs} of queries to answer
accurately. The input is the number of rounds \lstinline{t} and database
\lstinline{db}, and the output is a private synthetic database that is accurate
for the given queries. The code of the algorithm is given below:
\begin{lstlisting}
let rec dualquery t db = match t with
 | 0       -> munit []
 | 1 + t'  ->
   mlet curdb     = dualquery t' db        in
   let  quality   = build_quality t' curdb in
   mlet e         = expmech db quality     in
   mlet new_qry   = sampleN s e            in
   let  newrecord = opt new_qry            in
   munit (newrecord :: curdb)
\end{lstlisting}
We encode the database as a list of natural numbers; adjacent databases are
lists of the same length whose distance w.r.t.\, $\disttwo_{\stlist{\stnat}}$
is smaller than 1. Here we consider $\disttwo_{\stlist{\stnat}}$ to be defined
similarly to the distance $\disttwo_{\stlist{\typeone}[n]}$ for list of size $n$
defined in \Cref{fig:dist} but where the $n$ is provided implicitly by the
length of the lists.  We represent the output of the mechanism as a list of
selected records, each encoded as a natural number.

The algorithm performs \lstinline{t} steps, producing one record of
the synthetic database in every round. For each round, we first build
a \emph{quality score} \lstinline{quality}---a function from queries
to real numbers---based on the previously produced records, using the
auxiliary function \lstinline{build_quality}. If we think of the
current records as forming an approximate database, the quality score
measures how poorly the approximation performs on each query.  We then
sample \lstinline{s} queries using the exponential mechanism with this
quality score; queries with higher error are more likely to be
selected. These queries are fed into an optimization function
\lstinline{opt}, which chooses the next record to add to the
approximate database.

The only private operation is the exponential mechanism.  The quality
score we generate at each round \lstinline{i} has sensitivity
\lstinline{i}, and so a draw form the exponential mechanism is
$\lstt{i} \epsilon$-private.  Since $\lstt{i}$ is upper bounded by \lstinline{t} and
there are \lstinline{s} samples per round, the privacy cost per round
is bounded by $\lstt{s} \cdot \lstt{t} \cdot \epsilon$. With
\lstinline{t} rounds in total, the whole algorithm is
$\lstt{s}\cdot \lstt{t}^2\cdot \epsilon$-private.
This guarantee is reflected in the type of
\lstinline{dualquery}:
\[
\begin{array}{ll}
            & \rtref{\lstt{t}}{\N}{\l{\lstt{t}}=\r{\lstt{t}}} \\
\rightarrow &\rtref{\lstt{db}}{\stlist{\stnat}}{\disttwo_{\stlist{\stnat}}
      (\l{\lstt{db}}, \r{\lstt{db}}) \leq 1} \\
\rightarrow & \rtmod{\lstts{s} \cdot \lstts{t}^2 \cdot\epsilon}{0}{
\rtref{\lstt{l}}{\stlist{\N}}{\l{\lstt{l}}=\r{\lstt{l}}}} .
\end{array}
\]
The type states that for two runs with adjacent databases, \lstinline{dualquery}
will return synthetic databases that are $\lstt{s} \cdot \lstt{t}^2 \cdot
\epsilon$ apart, where $\lstt{t}$ is the number of iterations and $\lstt{s}$ is
the number of samples used.


\subsection{Private Counters and the Partiality Monad}
Our second example is a \emph{private counter}. The program takes in a list of
real numbers, and releases a list of running counts.
%
This algorithm is also known as the \emph{binary mechanism} due to
\citet{chan-counter} and has not been verified before; previous
verification work focused on the \emph{two-level} counter from the
same paper.

Suppose the input stream has length $T = 2^n$. The binary
mechanism will return a list of noisy sums, reusing noise to reduce
the improve the accuracy of the sums. The algorithm
proceeds via branching recursion. In the base case, we add Laplace noise to the
single element of the input stream and return. In the recursive case,
we split the input stream into a first and a second half and perform
the recursive call on each half;  we then return the noised sum of the whole
stream together with the result of the recursive calls.
Each output list contains the sums for one ``level'' of the tree; the
first list contains a single sum of length $2^n$, the next contains the two sums
of length $2^{n - 1}$, and so on.

The algorithm terminates, but the simple guard condition implemented
in our tool does not capture termination.%
\footnote{It is of course possible to prove termination using known techniques,
  but we want to demonstrate the partiality monad.}
Hence its formalization is based on the partiality monad and its associated
\lstinline{cunit} and \lstinline{clet} operations:
\begin{lstlisting}
let rec binary n ls = match l with
| []        -> cunit (munit [])
| x :: xs   -> match xs with
  | []      -> cunit (mlet sum = lap x in
                     munit ([sum] :: []))
  | y :: ys ->
    let (left, right) = split l in
    clet cleftN       = binary (n - 1) left  in
    clet crightN      = binary (n - 1) right in
    cunit(mlet leftN  = cleftN               in
          mlet rightN = crightN              in
          mlet sum    = lap (sum l)          in
          munit ([sum] :: (leftN ++ rightN)))
\end{lstlisting}
The algorithm \lstinline{binary} takes as input a natural number
\lstinline{n} and a list \lstinline{ls} of reals with length
$2^{\lstts{n}}$ and returns a list of lists of reals. Formally,
\lstinline{binary} has type
\[
\begin{array}{@{\hspace{0cm}}l@{\hspace{0.1cm}}l}
 & \rtref{\lstt{n}}{\N}{\l{\lstt{n}}=\r{\lstt{n}}} \\
 \rightarrow & \rtref{\lstt{l}}{\stlist{\R}}{
   \lstt{sz}(\l{\lstt{l}}) = \mathsf{sz}(\r{\lstt{l}}) = 2^{\lstts{n}}
   \land \disttwo_{\stlist{\reals}}(\l{\lstt{l}},\r{\lstt{l}}) \leq \lstt{k}} \\
 \rightarrow &
 \rtmodc{\rtmod{\epsilon \cdot \lstts{k} \cdot (\lstts{n}+1)}{0}{\rtref{\lstt{l}}{\stlist{(\stlist{\reals})}}{\l{\lstt{l}}=\r{\lstt{l}}}}} ,
\end{array}
\]
where we write $\lstt{n}$ for readability (since it is assumed equal in both
runs) and where we  use $\disttwo_{\stlist{\reals}}$ to the distance of lists at
the type $\stlist{\reals}$ (defined analogously to
$\disttwo_{\stlist{\stnat}}$).

\section{Auctions and Algorithmic Game Theory} \label{sec:auctions}



We now study the verification of mechanisms with incentive
properties. We start by describing the truthfulness property for
deterministic mechanisms, then we proceed to the randomized case. The
closing examples illustrates the problem of computing an approximate Nash
equilibrium using differential privacy.

\subsection{Truthful auctions}
In the \emph{digital goods} setting, there is an infinite supply of
identical goods to be sold in auction. For instance, when selling music
downloads, goods can be reproduced for free. We assume every agent (or
\emph{bidder}) $i$ has a secret value $v_i$, which is the price she values the
item, and submits a single bid $b_i$ to the mechanism. Once all bids have been
submitted, the mechanism selects a set of winning bidders and prices $p_i$ for
each winner.  Bidders aim to maximize their utility, which is $0$ if they do not
win and $v_i - p_i$ if they win and at price $p_i$.

We want our mechanism to be \emph{truthful}: given fixed bids of the
other agents $b_{-i}$, the utility of agent $i$ is maximized when she
bids her true valuation $b_i = v_i$. This feature makes bidding easy
for bidders, and provides the algorithm designer with some assurance
that she will see the correct inputs (the true values of the players).
From a verification point of view, truthfulness is a \emph{relational}
property of programs: if the mechanism maps bids to outcomes and all
but one of the bidders bids the same in both runs, then the remaining
bidder should have higher utility when bidding truthfully than when
bidding non-truthfully.

We start with the \emph{fixed-price auction}, a very simple mechanism for this
setting. First, we pick a price $p$ (the \emph{reserve price}). Bidders then
submit their bids, and we select all bidders who bid above $p$ to be winners.
Each winning bidder is charged price $p$.

Informally, this process is truthful: a bidder's price does not depend on her
own bid, so lowering her bid will never lower the price---it can only cause her
to lose the item at a price that she would have wanted to pay.  Similarly,
increasing her bid above her value is never beneficial: If her truthful bid is
winning, raising her bid does not change the outcome (she still wins, and pays
the same price). If her truthful bid is losing, raising her bid can only cause
her to win the item at a price that is higher than her value.

To model this with code, we will model a single bidder's utility when she
deviates.\footnote{%
  While it is possible to code the full auction that calculates all the winners
  and charges all the prices, we verify the core guarantee of truthfulness,
  which deals with a \emph{single} bidder's utility function.}
Note that each bidder is treated independently---her utility depends solely on
her value, her bid, and the reserve price, and not on what any of the other
bidders do. So, we can model this auction with the following function, which
calculates a single bidder's utility:
\begin{lstlisting}[mathescape]
let fixedprice b p = if b > p then v - p else 0
\end{lstlisting}
For clarity, we treat \lstinline{v} as a parameter declared in context
with refinement type $\rtref{\lstt{v}}{\reals}{\l{\lstt{v}} =
  \r{\lstt{v}}}$. Truthfulness of this auction follows from the type of
\lstinline{fixedprice}: 
\[
\rtref{\lstt{b}}{\reals}{\l{\lstt{b}} = \lstt{v}} \rightarrow 
\rtref{\lstt{p}}{\reals}{\l{\lstt{p}} = \r{\lstt{p}}}
\rightarrow
\rtref{\lstt{u}}{\reals}{\l{\lstt{u}} \geq \r{\lstt{u}}} .
\]
The relational variable \lstinline{b} is required to be equal to
\lstinline{v} in the first run, and arbitrary on the second run. Then,
the final utility \lstinline{u} cannot be higher on the second run,
demonstrating truthfulness.

This example also demonstrates a boolean version of the asynchronous
typing rule \textsc{ACase} from \Cref{fig:relty}. Since the bid
\lstinline{b} is arbitrary in the second run, the two runs may take
different branches. Indeed, these are the most interesting cases of
the reasoning: If the same branch is taken in both runs, then the
utility is the same in both runs (since the price is the same in both
runs). When different branches are taken, we verify that truthfulness
holds even when deviating from truthful bidding changes the outcome of
the auction.

\subsection{Universal Truthfulness and Randomized Mechanisms}

While the fixed-price auction is very simple, it has poor revenue
properties since the price is set independently of the bids. Setting
it too high will lead to very few goods sold (and hence low revenue),
and setting it too low may sell many goods, but at a price that is
substantially less than bidders would have been willing to pay (again,
low revenue). However, picking the price as a function of
the bids can destroy the auction's truthfulness property. What to do?

It turns out that randomization is a useful way around this problem. In the
random sampling auction due to \citet{GHKSW06}, the bidders are randomly split
into two groups $\lstt{g}_1$ and $\lstt{g}_2$. The fixed-price maximizing
revenue is computed for each group, and then a fixed-price auction is run in
each group---but using the price computed from the \emph{other} group.
Truthfulness holds since the price charged to any bidder remains independent of
\emph{her own} bid.

Since the mechanism is randomized, we want to verify
\emph{truthfulness in expectation}: an individual's expected utility
will never increase if she deviates from bidding her true value. In
fact, the random sampling auction satisfies a stronger property, known as
\emph{universal truthfulness}: a bidder will never be able to gain by
deviating from truthful bidding, even knowing the random coins of the
mechanism.

To model the random sampling auction, we will treat as parameters the value
$\lstt{v}: \reals$ of the single deviating bidder $*$ and the bids of the other
bidders $\lstt{bs} : \stlist{\reals}$; these are again assumed to the same on
both runs.  We define a (deterministic) utility function that takes the bid
(\lstinline{b}) for the deviating bidder $*$, a coin (\lstinline{mygrp})
indicating the group of $*$, and a list of coins (\lstinline{othergrp})
indicating the groups of the other bidders.  Then, utility for $*$ is computed
using the fixed-price auction with reserve price from the bids in the other
group; the optimal reserve price function is
denoted by \lstinline{optfixed}:
\begin{lstlisting}[mathescape]
let utility b (mygrp, othergrp) =
  let ($\lstt{g}_1$, $\lstt{g}_2$) =
    split (mygrp :: othergrp) (b :: bs) in
  if mygroup
    then fixedprice b (optfixed $\lstt{g}_2$)
    else fixedprice b (optfixed $\lstt{g}_1$)
\end{lstlisting}
Universal truthfulness can be seen from the type of \lstinline{utility}:
\[
\begin{array}{ll}
&\rtref{\lstt{b}}{\reals}{\l{\lstt{b}} = \l{\lstt{v}}} \rightarrow
\rtref{\lstt{c}}{\stprod{\stbool}{\stlist{\stbool}}}{\l{\lstt{c}} =
  \r{\lstt{c}}} \\
&\rightarrow \rtref{\lstt{u}}{\reals}{\l{\lstt{u}} \geq \r{\lstt{u}}}   
\end{array}
\]
The type shows that for any realization of the randomness, the utility
is maximized by truthful reporting.

The main auction takes in the real-valued bid \lstinline{b} of $*$, draws
the booleans indicating the groups, and uses the expectation operation to
compute the expected utility of $*$ on this distribution:

\begin{lstlisting}[mathescape]
let auction b =
  mlet me     = flip               in
  mlet others = repeat N flip      in
  let  coins  = munit (me, others) in
  $\Expect$ coins (utility b)
\end{lstlisting}
Above, \lstinline{flip} returns a uniformly random boolean, and has
type
\[
  \rtmod{0}{0}{\rtref{\lstt{c}}{\stbool}{\l{\lstt{c}} = \r{\lstt{c}}}}.
\]
The \lstinline{repeat} function is used to generate a list of $N$ random
booleans (where $N+1$ is the total number of bidders) that are then used to
split the other bidders into two groups.

Truthfulness for the random sampling auction is reflected by the type for
\lstinline{auction}, which computes the expected utility of $*$:
\[
  \rtref{\lstt{b}}{\reals}{\l{\lstt{b}} = \l{\lstt{v}}} \rightarrow
    \rtref{\lstt{u}}{\reals}{\l{\lstt{u}} \geq \r{\lstt{u}}} .
\]
To verify the truthfulness of this auction we rely on monotonicity of
expectation, as captured by the refinement type from~\Cref{sec:overview}.

\subsection{Nash Equilibrium via Differential Privacy}

In this section, we move beyond auctions and consider the more general setting
of \emph{games}. A game is played by a collection of $N$ agents indexed by $i$,
each with a set of possible actions $A_i$ (the \emph{action space}). Given a
vector of actions (one for each player) $a = (a_1, \dots, a_N)$, each agent
receives a (possibly randomized)
payoff $P_i (a_1, \dots, a_N)$; agents seek to maximize their (expected) payoff.
For an example, auctions can be considered as games where each agent's action
space is the space of possible bids, and the payoff of each agent is their
utility for the chosen outcome.

So far, we have considered mechanisms where one action (truthfully reporting) is
a \emph{dominant strategy}: a maximum payoff strategy no matter how the
opponents play.  In general games, like rock-paper scissors, dominant strategies
usually do not exist. In this section, we consider a weaker solution concept:
\emph{approximate Nash equilibrium}.

\begin{definition} \label{def:ANE}
  Let $\alpha \geq 0$. An assignment of agents to actions (a \emph{strategy
    profile}) $(a_1, \dots, a_N)$ is an \emph{$\alpha$-approximate Nash
    equilibrium} if no single agent $i$ can gain more than $\alpha$ payoff from
  a unilateral deviation from $a_i$, assuming that all other players are playing
  according to $a$.  That is, for all agents $i$ and actions $a_i'$,
  \[
    \Expect [P_i (a_1, \dots, a_i, \dots a_N)] \geq \Expect[P_i (a_1,
    \dots, a_i', \dots a_N)] - \alpha .
  \]
\end{definition}

We consider an algorithm for computing an approximate Nash equilibrium when
the payoffs are not publicly known. There are at least two difficulties: first,
a general game may have several Nash equilibria; agents may prefer different
equilibria among the many that exist. This may lead agents to misrepresent
their payoff functions to influence which equilibrium is selected, something we
want to prevent.  Second, payoff functions may consist of sensitive information,
and agents may be unwilling to reveal their payoffs to the mechanism if the
output could disclose their private information.

Somewhat remarkably, using differential privacy to solve the second problem also
solves the first: if we can compute a Nash equilibrium under differential
privacy, then the profile of actions when agents truthfully report their payoff
function is an approximate Nash equilibrium. A full discussion is beyond the
scope of this paper; we present and verify the approximate Nash equilibrium
property for a version of a mechanism due to \citet{CKRW14}, which computes
approximate Nash equilibrium of \emph{aggregative games}.


In an aggregative game, payoffs are a function only of an agent's own action and
a \emph{signal}, a non-negative real number bounded by $k$ that depends on the
aggregated actions of all players. That is, the payoff function for bidder $i$
is of the form $P_i(a_i, S(a_1, \dots, a_N))$, where $S$ is a signal function of
type $A_1 \times \cdots \times A_N \rightarrow [0, k]$.
In code, we will write \lstinline{sign} for the signal function and
\lstinline{k}\, for the bound; they are the same in both runs and we consider
them implicit parameters.

\begin{figure}
  \centering
\begin{lstlisting}
let rec mkSums i br* br =
  mlet s  = lap (sign (br* i) (br i))   in
  match i with
    | 0      -> munit [s]
    | i' + 1 ->
      mlet ss = mkSums i' br* br       $\;\,$in
      munit (s :: ss)
\end{lstlisting}
\begin{lstlisting}
let rec search i br* br sums =
  if |i - nth i sums| < T + 1/2
    then i
    else match i with
      | 0      -> 0
      | i' + 1 -> search i' br* br sums
\end{lstlisting}
\begin{lstlisting}
let expay br* dev* br dev =
  $\Expect$ (mlet    sums = mkSums k br* br      in
      let   $s^\bullet$     = search k br* br sums in
      let   a*   = dev* (br* $s^\bullet$)          in
      let   a    = dev  (br  $s^\bullet$)          in
      let   p*   = pay* a* (sign a* a)  in
      munit p*) ($\lambda$x. x)
\end{lstlisting}
\caption{\label{fig:agg-game} Aggregative game mechanism}
\end{figure}

The mechanism we have verified computes an equilibrium of an
aggregative game, where the payoffs are reported by the agents. We
want to show that reporting the true payoff \emph{and} playing the action
suggested by the mechanism is an approximate Nash equilibrium.

To keep the notation light, we think of all players as having the same action
space, and we consider only two players with actions \lstinline{a*} and
\lstinline{a}, respectively. We consider the player $*$ as the possibly
deviating player, while the other agent is a meta-player, representing all of
the other players in the aggregative game (who do not deviate).

The relevant code for the mechanism is in \Cref{fig:agg-game}. The function
\lstinline{expay} computes the expected pay-off for bidder $*$; it takes as
inputs \emph{best response functions} \lstinline{br*} and \lstinline{br} that
map signals to an agent's highest payoff action (this is how agents report their
payoff), and \emph{deviation functions} \lstinline{dev*} and \lstinline{dev}
that map recommended actions to actual actions. Agents also have a \emph{true}
payoff, \lstinline{pay*} and \lstinline{pay}, also considered as parameters
since they are the same in both runs. The mechanism will not use these functions
in the code; they are only referred to by the refinements.
 
The function \lstinline{expay} performs the following steps:
\begin{enumerate}
\item Use \lstinline{mkSums} to compute a noisy list \lstinline{sums} of
  signals, using the Laplace mechanism;
\item compute a signal $\lstt{s}^\bullet$ such that if both agents choose
  their (self-reported) best action $\lstt{br~s}^{\bullet}$ and
  $\lstt{br*}~\lstt{s}^{\bullet}$ for signal $\lstt{s}^{\bullet}$,
  then the true signal based on strategy profiles $\lstt{a}$ and
  $\lstt{a*}$ (defined next) is close to $\lstt{s}^\bullet$;
\item apply the deviation functions $\lstt{dev}$ and $\lstt{dev*}$ to the
  recommended action $\lstt{br~s}^{\bullet}$ and $\lstt{br*}~\lstt{s}^{\bullet}$
  of each player to produce the strategy profile $\lstt{a}$ and $\lstt{a*}$;
\item calculate the true payoff \lstinline{p*} for the deviating agent on the
  strategy profile;
\item compute the expectation of the payoff \lstinline{p*} for the
  deviating agent.
\end{enumerate}

Players have two opportunities to deviate: they could misreport their best
response function, or they could choose a deviation function to play differently
than their recommendation. We want to show that reporting the true best response
function and following the recommendation (i.e., using the identity function for
deviation) is an approximate Nash equilibrium. As before, we perform the
verification by assigning \lstinline{expay} a relational refinement type where
$*$ behaves truthfully in the left execution, while in the right execution $*$
behaves arbitrarily. Assuming that $\l{\lstt{br*}}$ is the true best response
function corresponding to $\lstt{pay*}$, and that $\l{\lstt{dev*}}$ is the
identity function, and that $\lstt{br}$ and $\lstt{dev}$ coincide on both runs
(we thus omit subscripts), we want to prove (according to \Cref{def:ANE}):
\[
 \lstt{expay}~\l{\lstt{br*}}~\l{\lstt{dev}}~\lstt{br}~\lstt{dev} \\
 \quad \geq \quad
\lstt{expay}~\r{\lstt{br*}}~\r{\lstt{dev*}}~\lstt{br}~\lstt{dev}
- \alpha
\]
for some value $\alpha$. We do this by checking that \lstinline{expay} has type:
\[
\begin{array}{l@{\hspace{0.2cm}}l@{\hspace{0.1cm}}l@{\hspace{0.1cm}}l}
  & \{ \lstt{br*}  & :: \R \to A & \mid \forall \lstt{s},\lstt{a}.~ \lstt{pay*}~(\l{\lstt{br*}}\, \lstt{s} )\,  \lstt{s} \geq  \lstt{pay*}\ \lstt{a}\, \lstt{s} \} \\
 \rightarrow & \{ \lstt{dev*} & :: A \to A  & \mid \forall \lstt{x}.~\l{\lstt{dev*}}~\lstt{x}=\lstt{x} \}  \\
 \rightarrow & \{ \lstt{br}   & :: \R \to A & \mid \l{\lstt{br}}=\r{\lstt{br}} \} \\
 \rightarrow & \{ \lstt{dev}  & :: A \to A  & \mid \forall
 \lstt{a}.~\l{\lstt{dev}}\, \lstt{a} =\r{\lstt{dev}}\, \lstt{a} = \lstt{a} \} \\
 \rightarrow & \{ \lstt{u}    & :: \rplusinfty & \mid \l{\lstt{u}} \geq
 \r{\lstt{u}} - \alpha \} .
\end{array}
\]
We briefly comment on the two most interesting steps in the verification.
First, when calculating $\lstt{s}^\bullet$, the algorithm computes
\lstinline{sums} by adding Laplace noise to each induced best response: we want
to ensure that agents have a limited influence on the chosen signal (and hence
have limited incentive to misreport their best response strategy). For the
accuracy guarantee we need to show that this noise is not too large (which is
the case if the signal and payoff functions satisfy certain low-sensitivity
conditions captured with refinement types).  This is modeled by assigning to the
Laplace mechanism \lstinline{lap} a refinement type capturing \emph{accuracy}:
\[
    \rtprod{\lstt{x}}{\reals}{
      \rtmod{\epsilon |\l{\lstts{x}}-\r{\lstts{x}}| }{\beta}{
        \rtref{\lstt{u}}{\reals}{\l{\lstt{u}}=\r{\lstt{u}} \land |\l{\lstt{x}} - \l{\lstt{u}}| < T }
      }} ,
\]
where $T$ is defined as $| \l{\lstt{x}}-\r{\lstt{x}}| + \frac{1}{\epsilon} \log
\frac{2}{\beta}$. Informally, this states that the added noise is
less than $T$ with probability $1 - \beta$.

The second interesting point is taking the expected value. The expression 
payoff \lstinline{p*} is randomized, and has type
\[
  \rtmod{ \epsilon'}{\delta'}{
 \rtref{\lstt{u}}{\rplus}{ \l{\lstt{u}} \geq \r{\lstt{u}} - \alpha'}}
\]
for some concrete values of $\epsilon', \delta'$ and $\alpha'$. The payoff above
is a probability distribution on real numbers, related by the lifted inequality
relation. We wish to take the expected value of these distributions, in order to
relate the \emph{expected} payoff on the two runs.  However, a priori, it is not
clear how the expected values of these distributions are related. Fortunately,
the expected values are related in a rather simple way, as seen in the following
refinement for $\Expect$:
$$\begin{array}{ll}
  & \rtmod{\epsilon'}{\delta'}{ \rtref{x}{\rplus}{\l{x} \geq \r{x} - \alpha'}} \\
 \rightarrow & \rtref{\lstt{f}}{\rplus\rightarrow\rplus}{
\l{\lstt{f}}=\r{\lstt{f}}=\mathsf{id}} \\
 \rightarrow & \rtref{\lstt{u}}{\rplusinfty}{\l{\lstt{u}} 
\geq \r{\lstt{u}} - \alpha''} ,
\end{array}
$$
where $\alpha''$ is an expression computed from $\epsilon'$, $\delta'$, and
$\alpha'$. That is, taking expectation of two distributions related by the
\emph{lifted} inequality relation yields two real numbers that are approximately
related by the \emph{unlifted}, standard inequality relation on real numbers.
Though not obvious, the soundness of this refinement can be derived from the
definition of expectation and lifting.

From this refinement on the expected payoff for $*$ computed by
\lstinline{expay}, we conclude that truthful reporting and following the
recommended action is an approximate Nash equilibrium.

\section{Related Work} \label{sec:relatedwork}
Our work lies at the intersection of differential privacy, mechanism
design, probabilistic programming languages, and verification.  We
briefly comment on the first three areas (which are too enormous to be covered
here) and elaborate on the most relevant work in program verification.

\paragraph*{Differential privacy}
Differential privacy, first proposed by \citet{BDMN05} and
formally defined by \citet{DMNS06}, has been an area of
intensive research in the last decade. We have touched on a handful of
private algorithms, including an algorithm for computing running
sums~\citep{chan-counter} (part of a broader literature on streaming
privacy), answering large classes of queries~\citep{GaboardiAHRW14}
(part of a broader literature on learning-theoretic approaches to data
privacy).  We refer readers interested in a more comprehensive
treatment to the excellent surveys by \citet{Dwork06,dpsurvey}.

\paragraph*{Mechanism Design}
Mechanism design was introduced to the theoretical computer science
community (with a new focus on efficient implementations) by the
seminal work of \citet{NR99}; see \citet{NRTV07} for a
textbook introduction. It is understood that truthfulness guarantees
can be difficult to prove and verify, so there is a literature giving
generic reductions from mechanism design to algorithm design in
limited settings, but it is known that this is not possible in full
generality \citep{BBHM08,DR14,HL10,CIL12}. Differential
privacy was first proposed as a tool in mechanism design by \citet{MT07}, and
has since found many applications; see \citet{PR13} for a survey of this area.

\paragraph*{Probabilistic programs}
There is a long line of work that develops models of probabilistic programs. The
monadic representation of distributions originates from~\citet{Giry82} and was
further developed in a programming language setting by later work
\citep{Ramsey:2002,Park:2005,Kiselyov:2009,Bornholt+14}. The connections with
machine learning have recently triggered a surge of interest in probabilistic
programming languages. We refer the reader to recent introductory
articles~\citep{Goodman:2013,GordonHNR14} for further information.

\paragraph*{Verification of higher-order programs}
The refinement type discipline was introduced by~\citet{FreemanP91}, and further
developed by others~\citep{DaviesP00,XiP99,DunfieldP04}.  Advances in SMT
solvers have allowed practical systems that support refinement types through SMT
back-ends, for instance \fseven~\citep{BBFGM08}, \fstar~\citep{fstar}, and
\liquidH~\citep{liquid}. Our work is mostly related to a recent variant of
\fstar\ called \rfstar~\citep{rfstar}. Like \THESYSTEM, \rfstar supports
relational reasoning of probabilistic computations. However, \rfstar\ lacks
support for approximate relational refinement types and higher-order
refinements, which are both critical for verifying differential privacy and
game-theoretic properties.

Dependent types is another expressive typing discipline that can be
used to verify properties of functional programs---they also form the
basis of proof assistants like Coq. Examples of dependently typed
languages include Cayenne~\citep{Augustsson98}, Epigram~\citep{epigram},
Idris~\citep{Brady13} and Trellys~\citep{CasinghinoSW14}. Like our system,
Trellys distinguishes between terminating and non-terminating
expressions to ensure logical consistency.

Other prominent approaches for verifying functional programs include
dynamic checking~\citep{FindlerF02} (and its combination with static
type-checking~\citep{Flanagan06,DBLP:conf/esop/WadlerF09,DBLP:conf/popl/GreenbergPW10}),
model-checking~\citep{OngR11} and translating functional programs into
logic~\citep{Vytiniotis+13}.

\paragraph*{Verification of differential privacy and mechanism design}
There has been significant work on language-based techniques
for verifying differentially privacy. \citet{Pierce:2012} defines three
categories: run-time enforcement
(PINQ~\citep{mcsherry.pinq09}, \textsc{Airavat}~\citep{Roy:2010}),
static enforcement (\Fuzz~\cite{ReedP10},
\DFuzz~\citep{GaboardiHHNP13}), and verification-based
enforcement (\textsf{CertiPriv}~\citep{POPL:BKOZ12}). Our work clearly
falls into the last category. All these works are focused on privacy,
rather than accuracy.
See~\citet{POPL:BKOZ12,Pierce:2012} for a more detailed account of related work.

There has been comparatively little work on language-based techniques for
verifying mechanism design.  \citet{BUCS-TR-2008-026} give an interesting
approach, by presenting a programming language for automatically verifying
simple auction mechanisms. A key component of the language
is a type analysis to determine if an algorithm is \emph{monotone}; if
bidders have a single real number as their value (\emph{single-parameter
  domains}), then truthfulness is equivalent to a monotonicity property (e.g.,
see \citet{mu2008truthful}). Their language can be extended by means of user-defined
primitives that preserve monotonicity. The paper shows the use of the language
for verifying two simple auction examples, but it is unclear how this approach
scales to larger auctions.

Finally, \citet{Fang14} propose the use of program synthesis for verifying
truthfulness of auctions. Their approach reduces the verification of auction to
linear constraints that can be handled by an SMT solver. In this respect,
their approach is similar in spirit to ours. However, the constraints they
consider are linear and moreover their technique applies to imperative programs.
The extension to higher order functions is not obvious.

\section{Future Directions} \label{sec:future}
\THESYSTEM is an expressive system of relational refinement types that captures
differential privacy, game-theoretic properties and other relational properties
of probabilistic computations. An exciting direction for further work is to
formally verify more complex mechanisms whose truthfulness guarantees are less
standard. For example, it would be interesting to verify mechanisms that
are merely \emph{truthful in expectation}, for which universally truthful
analogues do not exist (e.g., \citet{DD09}). Such mechanisms use randomization in
a crucial way for their incentive properties, and in addition to being
interesting challenges for verification, are mechanism for which truthfulness is
non-obvious. We also intend to develop a non-relational version of \THESYSTEM
for reasoning about the accuracy of probabilistic computations.

\paragraph*{Acknowledgements}
We thank Zhiwei Steven Wu for good discussions and in particular for suggesting
the aggregative games example. We also thank the anonymous reviewers for their
close reading and helpful comments. This research was supported by the
European
Community's Seventh
Framework grant \#272487, MINECO grant TIN2012-39391-C04-01 (StrongSoft), Madrid
regional project S2009TIC-1465 (\mbox{PROMETIDOS}), and NSF grant CNS-1065060.

\bibliographystyle{abbrvnat}
\bibliography{header,biblio}


\end{document}
